\documentclass[11pt]{article}
\usepackage{hyperref}
\usepackage{times}  % Fonts
\usepackage{mathpazo}
\usepackage{amssymb,amsmath,amsthm}
\usepackage{epsfig}

 \setlength{\topmargin}{-0.5in}
 \setlength{\textwidth}{6.5in} % can be up to 5.5
 \setlength{\textheight}{8.9in}
 \setlength{\evensidemargin}{-.1in}
 \setlength{\oddsidemargin}{-.1in}

\newtheorem{theorem}{Theorem}[section]

\newtheorem{definition}[theorem]{Definition}

\newtheorem{claim}[theorem]{Claim}
\newtheorem{lemma}[theorem]{Lemma}
\newtheorem{conjecture}[theorem]{Conjecture}

\newtheorem{corollary}[theorem]{Corollary}

\newtheorem{remark}[theorem]{Remark}

% QED & proof environment
\newcommand{\qedsymb}{\hfill{\rule{2mm}{2mm}}}
\renewenvironment{proof}[1][]{\begin{trivlist}
\item[\hspace{\labelsep}{\bf\noindent Proof#1:\/}] }{\qedsymb\end{trivlist}}

% macros

\def\calG{{\cal G}}
\def\calF{{\cal F}}

\def\calI{{\cal I}}
\def\calP{{\cal P}}

\def\Z{{\mathbb{Z}}}

% new epsilons
\newcommand{\eps}{\epsilon}
\renewcommand{\epsilon}{\varepsilon}

\newcommand{\dist}{\mathop{\mathrm{dist}}}

         % The integers
         % The naturals
\newcommand{\Fset}{\mathbb{F}}         % The integers
         % The reals

    % Style used for vectors

\begin{document}

\title{{\bf Sunflowers and Testing Triangle-Freeness of Functions}}

\author{
Ishay Haviv\thanks{School of Computer Science, The Academic College of Tel Aviv-Yaffo, Tel Aviv 61083, Israel.}
\and Ning Xie\thanks{SCIS, Florida International University, Miami, FL 33199, USA. Research supported in part by NSF grant 1423034. Email: {\tt nxie@cis.fiu.edu}}
}

\date{}

\maketitle

\begin{abstract}
A function $f: \Fset_2^n \rightarrow \{0,1\}$ is {\em triangle-free} if there are no $x_1,x_2,x_3 \in \Fset_2^n$ satisfying $x_1+x_2+x_3=0$ and $f(x_1)=f(x_2)=f(x_3)=1$. In testing triangle-freeness, the goal is to distinguish with high probability triangle-free functions from those that are $\eps$-far from being triangle-free. It was shown by Green that the query complexity of the canonical tester for the problem is upper bounded by a function that depends only on $\eps$ (GAFA,~2005), however the best known upper bound is a tower type function of $1/\eps$. The best known lower bound on the query complexity of the canonical tester is $1/\eps^{13.239}$ (Fu and Kleinberg,~RANDOM,~2014).

In this work we introduce a new approach to proving lower bounds on the query complexity of triangle-freeness. We relate the problem to combinatorial questions on collections of vectors in $\Z_D^n$ and to {\em sunflower conjectures} studied by Alon, Shpilka, and Umans (Comput. Complex.,~2013). The relations yield that a refutation of the Weak Sunflower Conjecture over $\Z_4$ implies a super-polynomial lower bound on the query complexity of the canonical tester for triangle-freeness. Our results are extended to testing $k$-cycle-freeness of functions with domain $\Fset_p^n$ for every $k \geq 3$ and a prime $p$. In addition, we generalize the lower bound of Fu and Kleinberg to $k$-cycle-freeness for $k \geq 4$ by generalizing the construction of uniquely solvable puzzles due to Coppersmith and Winograd (J. Symbolic Comput.,~1990).
\end{abstract}

\section{Introduction}

The research on {\em property testing}, initiated by Rubinfeld and Sudan~\cite{RubinfeldS96} and by Goldreich, Goldwasser, and Ron~\cite{GoldreichGR98}, is concerned with very efficient algorithms that distinguish with high probability objects which satisfy a given property from those that are far from satisfying it. Typically, one can think of an input object as a function from a domain $D$ to a range $R$, and of a property $\calP$ as a subset of the function set $D \rightarrow R$. For a distance parameter $\eps$, the goal of the randomized algorithm, called a {\em tester}, is to accept the functions of $\calP$ and to reject functions which are $\eps$-far from $\calP$, that is, disagree with any function in $\calP$ on at least $\eps$-fraction of the inputs. In case that the functions of $\calP$ are always accepted, we say that the tester has {\em one-sided} error. The main objective in property testing is to minimize the number of queries that the tester makes to the input object. If the number of queries depends solely on the distance parameter $\eps$, we say that the property is {\em strongly testable}.

Since the invention of the property testing model, many natural properties were shown to be strongly testable. A considerable amount of attention was given to testing {\em graph} properties, and the strongly testable dense graph properties were fully characterized~\cite{AlonFNS09,BorgsCLSSV06}. An important graph property testing problem is that of deciding if a given undirected graph is $H$-free, i.e., contains no subgraph isomorphic to $H$, or is $\eps$-far from $H$-freeness, where $H$ is a fixed graph. Whereas $H$-freeness is known to be strongly testable for every graph $H$, it turns out that the graph $H$ significantly affects the dependence of the query complexity on $\eps$. Alon proved in~\cite{Alon02} that for every bipartite graph $H$, the one-sided error query complexity of testing $H$-freeness is polynomial in $1/\eps$, and that for every non-bipartite graph $H$, it is super-polynomial in $1/\eps$, namely, at least $(1/\eps)^{\Omega(\log(1/\eps))}$. Interestingly, the lower bound for the non-bipartite case relies on a construction of Behrend~\cite{Behrend46} of dense sets of integers with no $3$-term arithmetic progressions (see also~\cite{SalemS42,Elkin11}) and on an extension of this construction given in~\cite{Alon02}.

Kaufman and Sudan initiated in~\cite{KaufmanS08} a systematic study of testing {\em algebraic} properties of functions with domain $\Fset^n$ for a field $\Fset$. They considered the class of {\em linear invariant} properties, those which are closed under all linear transformations of the domain, and asked for necessary and sufficient conditions for their strong testability (see~\cite{Sudan11,Bhattacharyya13} for relevant surveys). This class includes the properties that can be described as freeness of solutions to (possibly infinite) systems of linear equations, which were shown to be strongly testable in~\cite{Shapira10} and in~\cite{KSV12} (see also~\cite{BhattacharyyaGS10}). As opposed to the $H$-freeness property of graphs, it is not known which of these properties have query complexity polynomial in $1/\eps$. The $k$-cycle-freeness properties, whose query complexity is the focus of the current work, fall into this category and are described next.

\subsection{Testing $k$-Cycle-Freeness of Boolean Functions}

Let $n$ and $k \geq 3$ be integers, and let $\Fset_p$ be the finite field of prime order $p$. A {\em $k$-cycle} of a function $f: \Fset_p^n \rightarrow \{0,1\}$ is defined as $k$ vectors $x_1,\ldots,x_k \in \Fset_p^n$ satisfying $x_1+\cdots+x_k = 0$ and $f(x_i)=1$ for every $1 \leq i \leq k$. In case that $f$ has no $k$-cycles, we say that it is {\em $k$-cycle-free}. In the property testing problem of $k$-cycle-freeness over $\Fset_p$, the input is a function $f: \Fset_p^n \rightarrow \{0,1\}$, and the goal is to distinguish with high probability $k$-cycle-free functions from those that are $\eps$-far from every $k$-cycle-free function.
%Here, $\eps$-far means that at least $\eps \cdot p^n$ values of $f$ have to be changed in order to avoid all of its $k$-cycles.

In the {\em multiple-function variant} of the $k$-cycle-freeness problem, the input is a $k$-tuple of functions $f_1,\ldots,f_k: \Fset_p^n \rightarrow \{0,1\}$, and a $k$-cycle is defined as $k$ vectors $x_1,\ldots,x_k \in \Fset_p^n$ satisfying $x_1+\cdots+x_k = 0$ and $f_i(x_i)=1$ for every $1 \leq i \leq k$. The goal here is to distinguish $k$-cycle-free $k$-tuples of functions from those that are $\eps$-far from $k$-cycle-freeness, that is, at least $\eps \cdot p^n$ values returned by the functions $f_1,\ldots,f_k$ should be changed in order to make the $k$-tuple free of $k$-cycles. Clearly, the query complexity of the multiple-function variant of $k$-cycle-freeness is at least as large as that of the one-function variant. We observe, though, that whenever $p$ does not divide $k$, the two variants of testing $k$-cycle-freeness are essentially equivalent. On the other hand, in case that $p$ does divide $k$, the one-function variant of the problem seems to be easier, and has query complexity $O(1/\eps)$ (see Section~\ref{sec:multiple} for details). Therefore, in order to understand the query complexity of testing $k$-cycle-freeness in the one-function case, it suffices to understand it for the multiple-function case. The latter is more convenient to deal with while studying lower bounds, so from now on, unless otherwise specified, we refer to the multiple-function variant as the $k$-cycle-freeness problem.

A natural tester for $k$-cycle-freeness over $\Fset_p$, known as the {\em canonical tester} of the problem, repeatedly picks independently and uniformly at random $k-1$ vectors $x_1,\ldots,x_{k-1} \in \Fset_p^n$ and checks if they form, together with $-x_1-\cdots-x_{k-1}$, a $k$-cycle of the functions $f_1,\ldots,f_k$. If no $k$-cycle is found the tester accepts and otherwise it rejects. Despite the simplicity of this one-sided error tester, Green proved in~\cite{Green05} that for every $k$ it has a constant probability of success for query complexity that depends only on $\eps$, hence the $k$-cycle-freeness property is strongly testable. However, the query complexity achieved by Green has a huge dependence on $\eps$, namely, it is a tower of twos whose height is polynomial in $1/\eps$. An improved upper bound on the tower's height follows from results of~\cite{Fox11} and~\cite{KralSV09} (see~\cite[Section~5]{Fox11} and~\cite{HST13}).

The study of lower bounds on the query complexity of testing $k$-cycle-freeness was initiated by Bhattacharyya and Xie in~\cite{BX10}, where the case of triangles over $\Fset_2$ was considered. They provided the first non-trivial lower bound of $1/\eps^{4.847}$ on the query complexity of the canonical tester for triangle-freeness over $\Fset_2$. They also studied connections between the query complexity of the canonical tester for $k$-cycle-freeness over $\Fset_2$ to that of more general testers for the problem.

The proof technique of the above lower bound involved the notion, introduced in~\cite{BX10}, of {\em perfect-matching-free} families of vectors (PMFs). Roughly speaking, a PMF (for triangles over $\Fset_2$) is a collection $\{(a_i,b_i,c_i)\}_{i \in [m]}$ of $m$ triples of vectors in $\Fset_2^n$ satisfying that $a_{i_1}+b_{i_2}+c_{i_3} = 0$ if and only if $i_1 = i_2 = i_3$. This means that the functions $f_1,f_2,f_3 : \Fset_2^n \rightarrow \{0,1\}$ defined as the characteristic functions of the $a_i$'s, $b_i$'s, and $c_i$'s respectively, have $m$ triangles which are pairwise disjoint.
The distance of these functions from triangle-freeness is relatively large compared to the number of their triangles. Hence, they can be used to obtain lower bounds on the query complexity of the canonical tester for triangle-freeness over $\Fset_2$. The authors of~\cite{BX10} used a computer search to construct a PMF of vectors in $\Fset_2^n$ of size (roughly) $1.67^n$, and this allowed them to get their $1/\eps^{4.847}$ lower bound. Further, they showed that a PMF of size $(2-o(1))^{n}$ implies a super-polynomial lower bound on the query complexity of the canonical tester for triangle-freeness over $\Fset_2$, and conjectured that such a PMF exists.

Very recently, Fu and Kleinberg~\cite{FuK13} discovered an interesting connection between PMFs and the combinatorial objects known as {\em uniquely solvable puzzles} (USPs). The latter were introduced in the context of matrix multiplication algorithms and were explicitly defined by Cohn et al. in~\cite{CohnKSU05}. Coppersmith and Winograd~\cite{CoppersmithW90} implicitly gave a probabilistic construction of $n$-dimensional USPs of size $(3/2^{2/3}-o(1))^{n} \approx 1.89^n$ that played a central role in their famous $O(n^{2.376})$-time algorithm for multiplication of $n$ by $n$ matrices, whose running time was improved only two decades later~\cite{stothers,Williams12}. It was shown in~\cite{FuK13} that every USP implies a PMF of the same cardinality, and this led to an improved lower bound of $1/\eps^{13.239}$ on the query complexity of the canonical tester for triangle-freeness over $\Fset_2$. However, it was observed in~\cite{CohnKSU05} that the USP construction of~\cite{CoppersmithW90} is essentially optimal, hence it seems that the USP-based approach to proving lower bounds on testing triangle-freeness has been pushed to its limit, and, in particular, cannot yield super-polynomial lower bounds. Yet, a strengthened notion of USP, known as {\em strong} USP, was studied by Cohn et al.~\cite{CohnKSU05}, who proved that if strong USPs of optimal size exist then the exponent of matrix multiplication is $2$. A fascinating challenge, which was left open in~\cite{FuK13}, is to show that strong USPs might imply super-polynomial lower bounds on the query complexity of related testing problems.

In the current work we show that lower bounds on testing $k$-cycle-freeness might follow from the existence of certain collections of vectors in $\Z_D^n$. These collections are related to famous sunflower conjectures, which we turn to describe in the next section.

\subsection{Sunflower Conjectures}\label{sec:sunflowerIntro}

A {\em $k$-sunflower} is a collection of $k$ sets that have the same pairwise intersections. This notion was introduced in 1960 by Erd{\"o}s and Rado~\cite{ErdosR60}, and besides being of great interest in combinatorics, it found applications in several areas of computer science, e.g., circuit complexity~\cite{Razborov85,AlonB87}, hardness of approximation~\cite{DinurSafra05}, and property testing~\cite{AlonHW13}. The main question regarding $k$-sunflowers is how large a collection of sets containing no $k$-sunflowers can be. It was shown in~\cite{ErdosR60} that the size of any collection of subsets of size $s$ of some universe $U$ with no $k$-sunflowers is at most $s! \cdot (k-1)^s$. The classical sunflower conjecture of Erd{\"o}s and Rado asserts the following.

\begin{conjecture}[Classical Sunflower Conjecture~\cite{ErdosR60}]\label{conj:classical}
For every $k>0$ there exists a constant $c_k$, such that every collection of at least $c_k^s$ subsets of size $s$ of some universe $U$ contains a $k$-sunflower.
\end{conjecture}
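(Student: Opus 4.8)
Conjecture~\ref{conj:classical} is the central open problem of this area and no proof of it is known; what follows is a proposal for the shape such a proof would plausibly take, together with the precise point at which every current approach stalls. The plan is to reduce the statement to a single ``spreadness'' claim and then attack that claim by a probabilistic encoding argument. Fix $k$, let $c_k$ be a constant to be chosen, and suppose for contradiction that $\calF$ is a family of $s$-subsets of a universe $U$ with $|\calF| \ge c_k^s$ and with no $k$-sunflower; I would argue by induction on $s$, the cases $s \le 1$ being immediate. Call an element $x \in U$ \emph{heavy} if it lies in at least a $1/c_k$ fraction of the members of $\calF$. If a heavy $x$ exists, pass to its link $\calF_x = \set{S \setminus \sett{x}}{x \in S \in \calF}$, a family of $(s-1)$-sets with $|\calF_x| \ge |\calF|/c_k \ge c_k^{s-1}$; by the inductive hypothesis $\calF_x$ contains a $k$-sunflower with some core $Y$, and adjoining $x$ to every petal produces a $k$-sunflower in $\calF$ with core $Y \cup \sett{x}$, a contradiction. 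Hence we may assume no element is heavy, i.e.\ every element of $U$ lies in fewer than $|\calF|/c_k$ sets of $\calF$; this ``spread'' regime is the entire difficulty.

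For the spread regime the plan is an entropy/encoding argument. Draw a uniformly random coloring $\chi\colon U \to [r]$ for a suitable $r = r(k)$, view it as a random partition of $U$ into $r$ classes, and consider the restriction of $\calF$ to one color class. Spreadness should guarantee that, with positive probability, one locates $k$ sets $S_1,\dots,S_k \in \calF$ together with a common ``core candidate'' $Y$ so that the free parts $S_1 \setminus Y,\dots,S_k \setminus Y$ are pairwise disjoint, that is, a $k$-sunflower. The clean way to enforce this is contrapositive: assuming $\calF$ has no $k$-sunflower, describe a typical $S \in \calF$ by (i) the color pattern $\chi|_S$; (ii) a ``rare part'' of $S$ whose no-sunflower-forced structure makes it much cheaper to name than the $\log_2|\calF|$ bits it would naively cost; and (iii) a short correction term. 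Comparing the total length of this description with $\log_2|\calF| \ge s\log_2 c_k$ then bounds $c_k$ by a function of $k$ and $r$ alone, contradicting $|\calF| \ge c_k^s$ once $c_k$ is taken large enough. (An alternative route, which I expect to be weaker for this exact statement, is to attack the multilinear analogue of sunflower-free collections in $\Z_D^n$ --- the setting relevant to the present paper --- via the polynomial method.)

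The main obstacle, and the reason the conjecture remains open, is exactly the spread regime: one must control the simultaneous interaction of $k$ sets, whereas the classical Erd{\"o}s--Rado estimate $s!\,(k-1)^s$ uses only the link argument --- one element at a time --- which is why it degrades by a factor of $s$ at each of the $s$ steps. Making the encoding above lose only a \emph{constant} (independent of $s$) per coordinate, rather than a factor polynomial in $s$, is the crux; known refinements either still lose a super-exponential factor (yielding bounds of the form $s^{(1-o(1))s}$, as in Kostochka's improvement for $k=3$) or succeed only under extra structural hypotheses. Even the case $k=3$ is wide open, and it is consistent with current knowledge that the truth is $c_k = \poly(k)$, perhaps as small as $O(k\log k)$; establishing \emph{any} absolute constant $c_k$ valid in the spread regime would settle Conjecture~\ref{conj:classical}.
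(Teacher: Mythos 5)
The statement you were asked to ``prove'' is Conjecture~\ref{conj:classical}, the Erd\H{o}s--Rado Sunflower Conjecture. The paper does not prove it --- it is stated explicitly as a conjecture and cited to Erd\H{o}s and Rado, and the paper only \emph{uses} sunflower-type hypotheses as conditional assumptions. So there is no proof in the paper to compare yours against, and you were right not to manufacture one.

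Your write-up correctly identifies the statement as open and gives an accurate sketch of the standard attack. The link/heavy-element induction is exactly the Erd\H{o}s--Rado argument and is where the classical $s!\,(k-1)^s$ bound comes from; the observation that the entire difficulty is the ``spread'' regime (no element in a $1/c_k$ fraction of the sets) is correct; and the random-coloring-plus-encoding plan you outline is precisely the route that Alweiss, Lovett, Wu, and Zhang took in 2019, which yields a bound of roughly $(Ck\log s)^s$ and, after refinements by Rao, Tao, Bell--Chueluecha--Warnke, and others, $(Ck\log s)^{s}$ with small constants. That still leaves a $\log s$ gap from the conjectured $c_k^s$, and that gap is exactly the ``lose only a constant per coordinate'' obstruction you name. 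The one thing I would tighten is the framing: you describe this as a ``proof proposal,'' but it is really a survey of the best known partial result and the barrier it hits. For the record in a paper it would be cleaner to say plainly that the conjecture remains open, cite the $s!(k-1)^s$ bound of Erd\H{o}s--Rado and Kostochka's improvement for $k=3$ (both already referenced in Section~\ref{sec:sunflowerIntro}), and, if referencing post-2019 progress, attribute the spread-lemma/encoding approach to Alweiss--Lovett--Wu--Zhang rather than presenting it as a speculative plan.
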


The above conjecture is still open even for the special case of $k=3$. For this case, Kostochka showed an improved upper bound of $c \cdot s! \cdot \Big(\frac{30\ln \ln \ln s}{\ln \ln s}\Big)^s$ for some constant $c>0$~\cite{Kostochka97}. Erd{\"o}s and Szemer{\'e}di~\cite{ErdosS78} presented in 1978 the following conjecture on $3$-sunflowers inside $[n]$, and proved that Conjecture~\ref{conj:classical}, even restricted to $k=3$, implies it.
\begin{conjecture}[Sunflower Conjecture in $\{0,1\}^n$~\cite{ErdosS78}]\label{conj:0,1}
There exists an $\eps >0$, such that every collection $\calF$ of subsets of $[n]$ ($n \geq 2$) of size $|\calF| \geq 2^{(1-\eps)n}$ contains a $3$-sunflower.
\end{conjecture}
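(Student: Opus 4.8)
The statement to be proved is the Erd{\"o}s--Szemer{\'e}di implication recalled above: assuming Conjecture~\ref{conj:classical} for $k=3$ with some constant $c$, every $3$-sunflower-free family $\calF\subseteq 2^{[n]}$ satisfies $|\calF|\le 2^{(1-\eps)n}$ for a constant $\eps=\eps(c)>0$ (the argument needs $c<4$, which is the believed range for the sunflower constant). The plan is to reduce to uniform families and then treat small and large sets separately. Writing $\calF_s=\{F\in\calF:|F|=s\}$ and using $|\calF|\le (n+1)\max_s|\calF_s|$, it suffices to bound each $|\calF_s|$. For $s\le n/2$ the conjecture alone suffices: $|\calF_s|<c^{\,s}\le c^{\,n/2}=2^{(\log_2 c)\,n/2}$, which beats $2^{(1-\eps)n}$ precisely when $c<4$. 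All the work is in the dense regime $s>n/2$, where $c^{\,s}$ may exceed $2^n$ and the conjecture says nothing useful on its own.

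For $s>n/2$ the main tool is an element-fixing recursion. The key observation is that deleting a fixed element $x$ from all sets of $\calF_s$ containing it preserves sunflower-freeness: if $A_1,A_2,A_3$ form a $3$-sunflower with core $K$ among $\{F\setminus\{x\}:x\in F\in\calF_s\}$, then $A_1\cup\{x\},A_2\cup\{x\},A_3\cup\{x\}$ form one in $\calF_s$ with core $K\cup\{x\}$. Choosing $x$ to be the most popular element, a double-counting argument ($\sum_{x\in[n]}|\{F\in\calF_s:x\in F\}|=s\,|\calF_s|$) shows $x$ can be taken to lie in at least an $\tfrac{s}{n}$-fraction of the sets, hence $|\calF_s|\le\tfrac{n}{s}\bigl|\{F\in\calF_s:x\in F\}\bigr|\le\tfrac{n}{n-d}\,f(n-1,s-1)$, where $d:=n-s$ and $f(m,t)$ denotes the maximum size of a sunflower-free family of $t$-element subsets of $[m]$. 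Since deleting $x$ keeps the deficiency $d$ fixed, iterating down to a universe of size $2d$ yields $f(n,n-d)\le\bigl(\textstyle\prod_{m=2d+1}^{n}\tfrac{m}{m-d}\bigr)f(2d,d)=\tfrac{\binom{n}{d}}{\binom{2d}{d}}\,f(2d,d)$. The conjecture finally bites at the base, $f(2d,d)<c^{\,d}$, and since $\binom{2d}{d}=\Theta(4^{d}/\sqrt{d})$ this gives $|\calF_s|=f(n,n-d)\le O(n)\,\binom{n}{d}\,(c/4)^{d}$.

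Summing the two bounds over all sizes and invoking the binomial theorem, $|\calF|\le O(n)\bigl[c^{\,n/2}+\textstyle\sum_{d\ge 0}\binom{n}{d}(c/4)^{d}\bigr]=O(n)\bigl[c^{\,n/2}+(1+c/4)^{n}\bigr]$, and both $\tfrac12\log_2 c$ and $\log_2(1+c/4)$ are strictly below $1$ exactly when $c<4$; taking $\eps$ slightly below $1$ minus their maximum finishes the proof, the $O(n)$ prefactor being absorbed. I expect the dense-regime recursion to be the main obstacle: one must choose the element so as to control the multiplicative loss per step, and -- the crucial point -- stop the recursion at universe size $2d$ so that the conjectural bound $c^{\,d}$ beats the trivial $\binom{2d}{d}\approx 4^{d}$. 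This is also why the implication genuinely requires the sunflower constant to be below $4$: the bare assertion that some finite $c$ exists would not do, since a constant $c\ge 4$ is consistent with a dense sunflower-free family of $(n/2)$-subsets of $[n]$, and such a family would refute Conjecture~\ref{conj:0,1}.
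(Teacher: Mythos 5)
The statement you were asked about is an open conjecture; the paper does not prove it, so there is no paper-internal proof to compare against. You instead prove the implication the paper attributes to Erd\H{o}s and Szemer\'edi, namely that Conjecture~\ref{conj:classical} (for $k=3$, with constant $c=c_3$) entails Conjecture~\ref{conj:0,1}, and that is the right reading of the task. Your argument for the implication is correct and is essentially the standard one: decompose by layer, handle $s\le n/2$ directly from the conjectural bound $|\calF_s|<c^s$, and for $s>n/2$ iterate the popular-element recursion $f(n,s)\le\frac{n}{s}\,f(n-1,s-1)$ (justified by the correct observation that deleting a fixed element from the sets containing it preserves $3$-sunflower-freeness) down to the balanced base case $f(2d,d)$, where the conjecture gives $c^d$. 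The telescoping product $\prod_{m=2d+1}^{n}\frac{m}{m-d}=\binom{n}{d}/\binom{2d}{d}$ is correct, the final binomial summation giving $O(n)\bigl[c^{n/2}+(1+c/4)^n\bigr]$ is correct, and both exponents fall strictly below $1$ exactly when $c<4$.

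Your closing observation is the substantive point: the argument, and hence the implication established by this route, genuinely requires $c_3<4$, because at the critical middle layer the conjectural bound $c^d$ must beat the trivial $\binom{2d}{d}\approx 4^d$. The bare existential form of Conjecture~\ref{conj:classical} (``some finite $c_k$ exists'') does not yield Conjecture~\ref{conj:0,1} through this recursion if $c_3\ge 4$. The paper (following~\cite{AlonSU13}) states the implication without that caveat, implicitly relying on the widely-held belief --- supported by all known constructions and upper bounds --- that the sunflower constant, if it exists, is below~$4$. You are right to make this explicit.
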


In a recent paper, Alon, Shpilka, and Umans~\cite{AlonSU13} have studied a new notion of sunflowers over $\Z_D = \{1,\ldots,D\}$ and several related sunflower conjectures. Following their definition, we say that $k$ vectors $v_1,\ldots,v_k$ in $\Z_D^n$ form a {\em $k$-sunflower} if for every $i \in [n]$ it holds that the $i$th entries $(v_1)_i,\ldots,(v_k)_i$ of the vectors are either all equal or all distinct. It was shown in~\cite{AlonSU13} that Conjecture~\ref{conj:0,1} can be {\em equivalently} formulated in terms of sunflowers of vectors as follows.
\begin{conjecture}[Sunflower Conjecture in $\Z_D^n$~\cite{AlonSU13}]\label{conj:sunflowerD}
There exist $\eps > 0$, $D_0$ and $n_0$, such that for every $D \geq D_0$ and $n \geq n_0$, every collection $\calF$ of vectors in $\Z_D^n$ of size $|\calF| \geq D^{(1-\eps)n}$ contains a $3$-sunflower.
\end{conjecture}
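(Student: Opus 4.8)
The plan is to obtain Conjecture~\ref{conj:sunflowerD} from Conjecture~\ref{conj:0,1} by invoking the equivalence of~\cite{AlonSU13} quoted above, and to prove Conjecture~\ref{conj:0,1} itself by the polynomial (slice rank) method. Thus the entire task reduces to establishing that there is an absolute constant $\eps_0>0$ such that every $3$-sunflower-free family $\calF$ of subsets of $[n]$ satisfies $|\calF|\le 2^{(1-\eps_0+o(1))n}$; plugging this into the ASU equivalence then yields the required $\eps$, $D_0$ and $n_0$ for the $\Z_D^n$ formulation (with any fixed $\eps<\eps_0$).

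To bound such a family, first reduce to a uniform one: writing $\calF=\bigcup_{s=0}^{n}\calF_s$ with $\calF_s=\calF\cap\binom{[n]}{s}$, each $\calF_s$ is again $3$-sunflower-free, so it suffices to bound $|\calF_s|$ for the worst layer $s$ at the cost of a factor $n+1$. Fix such an $s$, identify each $A\in\calF_s$ with its indicator vector in $\{0,1\}^n$, work over $\Fset_3$, and consider
\[
P(\mat{x},\mat{y},\mat{z}) \;=\; \prod_{i=1}^{n}\bigl(x_i+y_i+z_i-2\bigr).
\]
A coordinate contributes a zero factor at a triple $(A,B,C)$ precisely when $x_i+y_i+z_i\equiv 2\pmod 3$, i.e.\ exactly two of the three indicators are $1$ — which is exactly the pattern forbidden in a set-sunflower. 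Hence: for distinct $A,B,C\in\calF_s$ some coordinate carries this pattern (otherwise they would form a $3$-sunflower), so $P(A,B,C)=0$; for a pair $A=B\neq C$ of equal size, uniformity forces a coordinate in $A\setminus C$, again killing a factor; and $P(A,A,A)=(-2)^n\equiv 1\pmod 3\neq 0$. Therefore the restriction of $P$ to $\calF_s^{\,3}$ is a diagonal tensor with nonzero diagonal, so by the slice rank lemma (in Tao's formulation of the Croot--Lev--Pach / Ellenberg--Gijswijt method) its slice rank equals $|\calF_s|$. On the other hand, expanding $P$ into monomials, every monomial has the form $c\prod_{i\in S}x_i\prod_{j\in T}y_j\prod_{\ell\in U}z_\ell$ with $S,T,U$ pairwise disjoint, hence $|S|+|T|+|U|\le n$ and one of them has size at most $n/3$; grouping monomials by whichever of $\mat{x},\mat{y},\mat{z}$ carries the small-support factor bounds the slice rank by $3\sum_{j\le n/3}\binom{n}{j}=2^{(H(1/3)+o(1))n}$, where $H$ is the binary entropy function and $H(1/3)=\log_2 3-\tfrac23\approx 0.918$. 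Combining, $|\calF|\le(n+1)\cdot 2^{(H(1/3)+o(1))n}=2^{(1-\eps_0+o(1))n}$ with $\eps_0=1-H(1/3)>0$, which is Conjecture~\ref{conj:0,1}.

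The step I expect to be the crux — and the reason uniformity must be introduced — is the claim that $P$ vanishes on the \emph{whole} off-diagonal of $\calF_s^{\,3}$ and not merely on triples of pairwise distinct sets: without restricting to a single layer one could have $A=B\subsetneq C$, in which case no coordinate carries the forbidden two-ones-one-zero pattern, so $P(A,B,C)\neq 0$ and the diagonal structure collapses. A second point requiring care is the arithmetic keeping the diagonal entries nonzero, i.e.\ that the characteristic does not divide the product of the on-diagonal factor values; this is what dictates working over $\Fset_3$ (where $x_i+x_i+x_i-2\equiv1$) rather than, say, $\Fset_2$. Finally, one must verify that the exponent $H(1/3)$ produced by truncating at degree $n/3$ is genuinely bounded away from $1$, so that $\eps_0=1-H(1/3)$ is a true positive constant and not merely $1-o(1)$; this quantitative gain is the real heart of the argument and is exactly the phenomenon that the slice rank method was designed to exploit.
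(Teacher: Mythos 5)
The statement you were asked to prove is presented in the paper as a \emph{conjecture} of Alon, Shpilka, and Umans: the paper contains no proof of it, explicitly treats it (like Conjectures~\ref{conj:classical} and~\ref{conj:0,1}) as open, and in fact builds its main theorems on the hypothesis that the related \emph{weak} conjectures over small moduli are false. So there is no argument in the paper to compare yours against; you are supplying a proof where the paper offers none. That said, your proof is correct. The reduction to Conjecture~\ref{conj:0,1} via the equivalence of~\cite{AlonSU13} is legitimate (the paper itself asserts the two conjectures are equivalent), and your proof of Conjecture~\ref{conj:0,1} is precisely the Naslund--Sawin resolution of the Erd\H{o}s--Szemer\'edi sunflower conjecture by the Croot--Lev--Pach / Ellenberg--Gijswijt slice-rank method --- mathematics that postdates this paper. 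The details check out: for three pairwise distinct sets, failing to be a sunflower is equivalent to some element lying in exactly two of them, which is exactly when a factor $x_i+y_i+z_i-2$ vanishes over $\Fset_3$; the degenerate case $A=B\neq C$ is correctly disposed of by passing to a uniform layer at the benign cost of a factor $n+1$; the diagonal entries equal $(-2)^n\equiv 1\pmod{3}$, which is why $\Fset_3$ is the right field; and the monomial count $3\sum_{j\le n/3}\binom{n}{j}=2^{(H(1/3)+o(1))n}$ with $H(1/3)=\log_2 3-\tfrac{2}{3}<1$ yields a genuine constant $\eps_0=1-H(1/3)\approx 0.082$. In a final write-up you should cite explicitly the two black boxes you invoke: Tao's slice-rank lemma for diagonal tensors, and the precise statement (and direction) of the~\cite{AlonSU13} equivalence, since your argument only needs the implication from Conjecture~\ref{conj:0,1} to Conjecture~\ref{conj:sunflowerD}. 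Note also that settling this large-$D$ conjecture does not disturb the paper's conditional results, which hinge on the weak conjectures over fixed small moduli such as $\Z_4$ and $\Z_{p^2}$.
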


The above conjecture, just like Conjecture~\ref{conj:0,1}, is widely believed to be true. Still, one might wonder if its assertion holds for small values of $D$. It is stated below for a specific integer $D$.
\begin{conjecture}[Weak Sunflower Conjecture over $\Z_D$~\cite{AlonSU13}]\label{conj:weakD}
There exist $\eps > 0$ and $n_0$, such that for every $n \geq n_0$, every collection $\calF$ of vectors in $\Z_D^n$ of size $|\calF| \geq D^{(1-\eps)n}$ contains a $3$-sunflower.
\end{conjecture}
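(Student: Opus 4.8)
\medskip
\noindent\textbf{Towards Conjecture~\ref{conj:weakD}.}
Were one to attempt a proof, the target is an \emph{exponential} upper bound $|\calF| \le D^{(1-\eps)n}$ on the size of any $3$-sunflower-free collection $\calF \subseteq \Z_D^n$, and the most promising route seems to be the polynomial (``slice rank'') method rather than a Fourier-analytic density increment. The reason is that, unlike a $3$-term arithmetic progression, the sunflower relation is invariant only under the \emph{combinatorial} symmetries of $\Z_D^n$ --- permutations of the coordinates and per-coordinate relabelings --- and not under general affine maps; consequently a large Fourier coefficient of $1_\calF$ would yield a density increment onto an algebraic subgroup, on which sunflower-freeness is not inherited in a usable form, while the ``combinatorial-subspace'' increment one actually wants puts the problem in the density Hales--Jewett regime, whose best known quantitative bounds are far weaker than exponential. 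So the first step would be to look for a symmetric polynomial $g(a,b,c)$ of \emph{small} degree over a suitable field $\Fset$ --- with $\Z_D$ embedded in $\Fset$ --- such that $g(a,b,c)=1$ when $a,b,c$ are all equal or all distinct and $g(a,b,c)=0$ otherwise, and then set $T(x,y,z)=\prod_{i=1}^{n} g(x_i,y_i,z_i)$, the indicator that $(x,y,z)$ is a $3$-sunflower, which has degree $n\cdot\deg(g)$ in its $3n$ variables.

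The second step is the familiar slice-rank dichotomy. Since $\calF$ has no non-trivial $3$-sunflower, the restriction of $T$ to $\calF\times\calF\times\calF$ agrees with the diagonal tensor $\sum_{v\in\calF} e_v\otimes e_v\otimes e_v$, whose slice rank is exactly $|\calF|$. On the other hand, expanding $T$ into monomials and observing that in each monomial one of the three blocks of variables carries degree at most $\tfrac{1}{3} n\deg(g)$, one bounds the slice rank of $T$ by $3M$, where $M$ counts the monomials in $n$ variables, each of degree less than $D$, of total degree at most $\tfrac{1}{3} n\deg(g)$. Combining, $|\calF|\le 3M$, and a routine entropy estimate turns this into $|\calF|\le D^{(1-\eps)n}$ as long as $\deg(g)$ is bounded away from $\tfrac{3}{2}(D-1)$. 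For $D=3$ the certificate is immediate: three distinct vectors of $\Z_3^n$ form a sunflower precisely when they sum to zero, so one may take $g(a,b,c)=1-(a+b+c)^2$ of degree $2$; a sunflower-free collection is then exactly a cap set in $\Fset_3^n$, and Conjecture~\ref{conj:weakD} over $\Z_3$ coincides with the cap set problem.

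The main obstacle is precisely the first step for $D\ge4$: exhibiting a low-degree certificate $g$. The naive interpolating polynomial for ``all equal or all distinct'' has degree $3(D-1)$, too large for the estimate above; for $D=3$ one sidesteps it via the linear relation $x_1+x_2+x_3=0$, but it is unclear what plays that role for larger $D$, where a sunflower is genuinely the solution set of a per-coordinate \emph{disjunction} rather than of a single equation, and the final saving $\eps$ degrades --- and eventually vanishes --- as $\deg(g)/D$ grows. Composite $D$ brings the extra nuisance that $\Z_D$ carries no field structure, so one must pass to an auxiliary $\Fset_q$ and verify that the embedding creates no spurious sunflowers. The Fourier approach and the reduction of~\cite{AlonSU13} to the Boolean Conjecture~\ref{conj:0,1} hit the same wall: an affine density increment, or an embedding of a dense sunflower-free $\calF\subseteq\Z_D^n$ into $\{0,1\}^m$, loses a factor of $\log_2 D$ in the dimension, which is fatal. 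In every approach the crux is the same --- turning the purely \emph{qualitative} absence of a $3$-sunflower into an \emph{exponential} density deficit, exactly the difficulty that has kept the general sunflower conjecture open --- and this is why the \emph{refutation} of Conjecture~\ref{conj:weakD} for a small modulus, which the remainder of this paper shows would yield new lower bounds for testing $k$-cycle-freeness, remains a genuine possibility.
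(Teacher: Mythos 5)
The statement you were assigned is not a theorem of this paper but an open conjecture, quoted from Alon, Shpilka, and Umans~\cite{AlonSU13}; the paper offers no proof of it and does not claim one. Indeed the paper's contribution runs in the opposite direction: Theorems~\ref{thm:F2Intro} and~\ref{thm:FpIntro} show that a \emph{refutation} of Conjecture~\ref{conj:weakD} over $\Z_4$ or $\Z_{p^2}$ would yield super-polynomial query lower bounds for testing triangle-freeness, and the surrounding discussion explicitly entertains the possibility that the conjecture fails for small $D$ even while the large-$D$ version (Conjecture~\ref{conj:sunflowerD}) is widely believed. You recognize all of this and, rather than fabricating a proof of an open problem, give a candid survey of candidate techniques and the obstruction each one meets, which is the appropriate response; there is no proof in the paper against which to compare.

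Your technical commentary is sound. The slice-rank set-up is correctly arranged, the break-even threshold $\deg(g)<\tfrac{3}{2}(D-1)$ for the entropy estimate to give an exponential saving is the right one, and the $D=3$ observation --- that $g(a,b,c)=1-(a+b+c)^2$ has degree $2<3$, and that a $3$-sunflower in $\Z_3^n$ is exactly a non-trivial zero-sum triple, so sunflower-free sets are cap sets --- is precisely the Croot--Lev--Pach/Ellenberg--Gijswijt argument. That resolution post-dates the paper, which cites only the Bateman--Katz $O(3^n/n^{1+\eps})$ bound; so the Weak Sunflower Conjecture over $\Z_3$ is now a theorem, and via the paper's own base-$3$ encoding from $\Z_9^n$ into $\Z_3^{2n}$ (end of Section~\ref{sec:sunfPMF}) so is the conjecture over $\Z_9$, which closes the refutation route of Theorem~\ref{thm:FpIntro} for $p=3$. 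Your diagnosis of the obstacle for $D\geq 4$ --- that ``all equal or all distinct'' is a per-coordinate disjunction with no analogue of the modulus-$3$ identity $a+b+c=0$, so the naive interpolant has degree $3(D-1)$, well past the threshold, and the composite modulus forces a passage to an auxiliary field --- accurately captures why $D=4$, the case Theorem~\ref{thm:F2Intro} actually needs, is genuinely harder and why its status bears directly on the paper's program.
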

\noindent
Of special importance is the Weak Sunflower Conjecture over $\Z_3$, which refers to the maximum possible size of a collection of vectors in the group $\Z_3^n$ with no $3$-term arithmetic progressions (or, equivalently, non-trivial triples of vectors with zero sum modulo $3$). The largest known construction of such collections has cardinality $c^n$ for $c \approx 2.217$~\cite{Edel04}. An upper bound of $2 \cdot 3^n /n$ was shown by Meshulam in~\cite{Meshulam95} (see~\cite{LiuS09} for a generalization of his result), and this was recently improved by Bateman and Katz to $O(3^n / n^{1+\eps})$ for some constant $\eps >0$~\cite{BatemanK12}.

Given the similarity between Conjecture~\ref{conj:sunflowerD} and the Weak Sunflower Conjecture over $\Z_D$, one might guess that the latter is true for small values of $D$. In fact, it was observed in~\cite{AlonSU13} that for {\em every} $D \geq 3$, the assertion of the Weak Sunflower Conjecture over $\Z_D$ implies Conjecture~\ref{conj:sunflowerD}. Nevertheless, Conjecture~\ref{conj:sunflowerD} seems to be much more likely to hold than Conjecture~\ref{conj:weakD} for small values of $D$. For example, as explained in~\cite{AlonSU13}, the case of $D=3$ can be viewed as a variant of the assertion that collections of $D^{(1-\eps)n}$ vectors in $\Z_D^n$ must contain a $3$-term arithmetic progression modulo $D$. However, a result of Salem and Spencer~\cite{SalemS42} implies that the latter is false for large values of $D$, namely, for $D > 2^{2/\eps}$.

\subsection{Our Contribution}

In this work we introduce a new approach to proving lower bounds on the query complexity of testing $k$-cycle-freeness over $\Fset_p$ for general $k \geq 3$ and primes $p$. To do so, we show that certain collections of vectors in $\Z_D^n$, which are related to some of the sunflower conjectures described above, can be used to obtain perfect-matching-free vector families. For example, for the special case of triangle-freeness over $\Fset_2$, it is shown that a large collection of vectors in $\Z_4^n$ containing no $3$-sunflowers implies a large perfect-matching-free family over $\Fset_2$, thus implying a lower bound on testing triangle-freeness. In case that the size of the collection is $(4-o(1))^{n}$, it yields a super-polynomial lower bound, as stated below.

\begin{theorem}\label{thm:F2Intro}
If the Weak Sunflower Conjecture (Conjecture~\ref{conj:weakD}) over $\Z_{4}$ is false, then the query complexity of the canonical tester for triangle-freeness over $\Fset_2$ for distance $\eps$ is super-polynomial in $1/\eps$.
\end{theorem}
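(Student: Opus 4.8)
The plan is to contrapositive-chain through the reductions the paper has set up: a refutation of the Weak Sunflower Conjecture over $\Z_4$ gives, for infinitely many $n$ and some fixed $\delta>0$, a collection $\calF$ of at least $4^{(1-o(1))n}$ vectors in $\Z_4^n$ containing no $3$-sunflower, and I want to turn this into a perfect-matching-free family over $\Fset_2$ of size $(2-o(1))^{n'}$ for a related dimension $n'$, which by the result of~\cite{BX10} quoted in the excerpt yields a super-polynomial lower bound on the canonical tester for triangle-freeness over $\Fset_2$. So the whole content is the middle step: sunflower-free collection in $\Z_4^n$ $\Longrightarrow$ large PMF over $\Fset_2$.

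First I would fix the encoding of $\Z_4$ into $\Fset_2$ that makes the "no $3$-sunflower" condition line up with the PMF condition $a_{i_1}+b_{i_2}+c_{i_3}=0 \iff i_1=i_2=i_3$. Each coordinate of a vector in $\Z_4^n$ takes one of four values; for the PMF we need, in each coordinate, a triple of $\Fset_2$-bits $(\alpha,\beta,\gamma)$ assigned to the three "roles" so that $\alpha+\beta+\gamma=0$ in $\Fset_2$ exactly when the three $\Z_4$-symbols are either all equal or all distinct (the sunflower alternative), and $\neq 0$ otherwise. Concretely I expect to encode a symbol $s\in\Z_4$ into a short block of bits — the natural guess is a $4$-bit indicator (or a carefully chosen shorter code) placed differently in the $a$-, $b$-, $c$-coordinates — so that summing an $a$-block from symbol $s_1$, a $b$-block from symbol $s_2$, and a $c$-block from symbol $s_3$ gives the all-zero block iff $\{s_1,s_2,s_3\}$ is an "all equal or all distinct" pattern in $\Z_4$. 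Given such a gadget on one coordinate, take the direct sum (concatenation) over all $n$ coordinates: the resulting $\Fset_2$-triples $(a_i,b_i,c_i)$, one per vector of $\calF$, satisfy $a_{i_1}+b_{i_2}+c_{i_3}=0$ iff in every coordinate the symbols of $v_{i_1},v_{i_2},v_{i_3}$ are all-equal-or-all-distinct, i.e.\ iff $v_{i_1},v_{i_2},v_{i_3}$ form a $3$-sunflower in $\Z_D^n$ — and since $\calF$ has no $3$-sunflower, the only such triples are the degenerate $v_{i_1}=v_{i_2}=v_{i_3}$ ones (which are always sunflowers), forcing $i_1=i_2=i_3$. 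That is exactly the PMF property.

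Next I would check the counting. If each $\Z_4$-symbol costs $b$ bits per role, the PMF lives in $\Fset_2^{bn}$ and has size $|\calF| \ge 4^{(1-\delta)n} = 2^{2(1-\delta)n}$. For the~\cite{BX10} criterion I need this to be $(2-o(1))^{bn}$, i.e.\ the exponent $2(1-\delta)n$ must be $(1-o(1))\cdot bn$, which needs $b \le 2(1-\delta)$, so $b$ essentially must be $2$ (asymptotically): I must encode a $\Z_4$-symbol into $2$ bits per role, not $4$. So the gadget design is constrained — I need a map from $\Z_4$ into $\Fset_2^2$ for each of the three roles realizing the sunflower-detection identity with only $2$ output bits per role. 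The main obstacle is precisely finding this width-$2$ gadget (or showing a slightly larger width still suffices after the dimension bookkeeping, e.g.\ using a product/tensor trick to amortize a width-$b>2$ gadget down, or invoking a composition lemma established earlier in the paper for the general $\Z_D$ / $k$-cycle setting). Once the gadget exists with the right width, the rest is routine: assemble the PMF, compute its size and dimension, plug into the~\cite{BX10} implication "PMF of size $(2-o(1))^{n'}$ over $\Fset_2$ $\Rightarrow$ super-polynomial lower bound on the canonical tester for triangle-freeness," and conclude. I would also double-check the degenerate directions of the iff: that $i_1=i_2=i_3$ always yields a zero sum under the encoding (so no triangles are lost) and that the "all distinct" coordinate pattern is genuinely mapped to zero, since $D=4$ is the first case where "all distinct" is possible for three symbols and is the reason $\Z_4$ rather than $\Z_3$ appears here.
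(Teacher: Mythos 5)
Your overall architecture is the same as the paper's: convert a $3$-sunflower-free collection $\calF \subseteq \Z_4^n$ into a local PMF over $\Fset_2$ of capacity $2$ via a per-coordinate gadget, then invoke the PMF-to-lower-bound implication (Lemma~\ref{lemma:LowerBound(m,n)}/Corollary~\ref{cor:PMF2LowerBound}). You also correctly pin down that the gadget must have width $b=2$ (each $\Z_4$-symbol $\mapsto$ three vectors in $\Fset_2^2$, one per role) for the counting $4^{(1-o(1))n}=(2-o(1))^{2n}$ to go through. But there is a genuine gap, and it sits exactly in the gadget you flagged as ``the main obstacle.''

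First, the gadget condition you state --- block sum zero \emph{iff} the three $\Z_4$-symbols are all equal or all distinct --- is impossible with \emph{any} block width. Indeed, suppose $g_1,g_2,g_3:\Z_4\to\Fset_2^b$ satisfy (i) $g_1(s)+g_2(s)+g_3(s)=0$ for all $s$ and (ii) $g_1(s_1)+g_2(s_2)+g_3(s_3)=0$ for all distinct $s_1,s_2,s_3$. Comparing $(a,b,t)$ with $(b,a,t)$ in (ii) gives $g_1(a)+g_2(a)=g_1(b)+g_2(b)$ for all $a\neq b$, so $g_1+g_2$ is constant; similarly $g_1+g_3$ and $g_2+g_3$ are constant. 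Combined with (i) this forces each $g_j$ to be constant, and then \emph{every} triple sums to zero --- including the exactly-two-distinct triples you must forbid. Fortunately the iff is not needed: the argument only requires (i) together with the weaker (ii$'$): if exactly two distinct symbols appear among $s_1,s_2,s_3$, the block sum is nonzero. Then the concatenated sum being zero implies no coordinate has exactly two distinct symbols, hence every coordinate is all-equal-or-all-distinct, hence the three vectors form a $3$-sunflower, hence (as $\calF$ is sunflower-free) they are all equal. This is exactly the structure of the paper's Theorem~\ref{thm:ConsPMF} via Corollary~\ref{cor:constructionB}, whose items (1) and (2) are precisely (i) and (ii$'$).

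Second, and more importantly, the gadget itself is not constructed, and that construction is the real content. The paper builds it algebraically from $\Fset_{p^{k-1}}$. For $p=2$, $k=3$: identify $\Z_4$ with $\Fset_4$, encode $\Fset_4$ into $\Fset_2^2$ by a linear bijection $\mathrm{enc}$, fix a generator $\beta$ of $\Fset_4^*$, and assign to symbol $\alpha$ the three role-values $\mathrm{enc}(\alpha)$, $\mathrm{enc}(\alpha\beta)$, $\mathrm{enc}(\alpha\beta^2)$. Property (i) holds since $1+\beta+\beta^2=0$; property (ii$'$) holds because if the roles with symbol $\alpha$ form a nonempty proper $I\subset\{0,1,2\}$ and the rest have $\alpha'\neq\alpha$, the block sum is $\mathrm{enc}\bigl((\alpha-\alpha')\sum_{\ell\in I}\beta^\ell\bigr)$, which is nonzero since every such partial sum $\sum_{\ell\in I}\beta^\ell$ is a nonzero element of $\Fset_4$ (this is where $\Fset_2$-linear independence of $\{1,\beta\}$ is used, as in Lemma~\ref{lemma:constructionA}). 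Without this explicit construction --- or some substitute for it --- the proposal does not yet prove the theorem.
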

\noindent
As alluded to before, a refutation of the Weak Sunflower Conjecture over $\Z_D$ for small values of $D$ would not be overly surprising, thus a super-polynomial lower bound on testing triangle-freeness over $\Fset_2$ might stem from the above theorem. Yet, even if the Weak Sunflower Conjecture over $\Z_4$ is true, large collections of vectors in $\Z_4^n$ containing no $3$-sunflowers might provide improvements on the known lower bounds. Specifically, our results imply that any such collection of size $(c-o(1))^{n}$ for $c> 9/2^{4/3} \approx 3.57$ beats the best known lower bound of~\cite{FuK13}, but for $c < 4$ the obtained lower bound is only polynomial in $1/\eps$ (see Theorem~\ref{thm:Mainp^2}).

We then generalize Theorem~\ref{thm:F2Intro} in a couple of ways. First, we obtain the following extension to triangle-freeness over $\Fset_p$, where $p$ is an arbitrary prime.
\begin{theorem}\label{thm:FpIntro}
For every prime $p$, if the Weak Sunflower Conjecture (Conjecture~\ref{conj:weakD}) over $\Z_{p^2}$ is false, then the query complexity of the canonical tester for triangle-freeness over $\Fset_p$ for distance $\eps$ is super-polynomial in $1/\eps$.
\end{theorem}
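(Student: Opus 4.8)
The strategy is to reduce Theorem~\ref{thm:FpIntro} to Theorem~\ref{thm:F2Intro} by establishing the key link between sunflower-free collections in $\Z_{p^2}^n$ and perfect-matching-free (PMF) families over $\Fset_p$, exactly paralleling the $p=2$ case. Concretely, suppose the Weak Sunflower Conjecture over $\Z_{p^2}$ fails, so for every $\eps>0$ there are arbitrarily large $n$ admitting a collection $\calF \subseteq \Z_{p^2}^n$ with $|\calF| \geq (p^2)^{(1-\eps)n}$ and no $3$-sunflower. First I would fix the encoding $\phi: \Z_{p^2} \to \Fset_p^t$ for a suitable constant $t$ (presumably $t$ is chosen so that the three ``coordinates'' of a cycle can be realized as disjoint blocks), which sends an element $a \in \Z_{p^2}$ to a triple of $\Fset_p$-vectors whose sum is $0$ precisely when the sunflower condition holds per coordinate; applying $\phi$ coordinatewise to each $v \in \calF$ yields a candidate PMF family over $\Fset_p^{tn}$. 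The core claim to verify is: the triples $(a_{v}, b_{v}, c_{v})_{v \in \calF}$ so obtained satisfy $a_{v_1} + b_{v_2} + c_{v_3} = 0$ (as vectors over $\Fset_p$) if and only if $v_1 = v_2 = v_3$ — the ``if'' direction being a direct computation from the construction of $\phi$, and the ``only if'' direction being exactly where the no-$3$-sunflower hypothesis is used: a spurious zero-sum triple with $v_1, v_2, v_3$ not all equal would force, on the coordinate where they differ, a local pattern that the gadget $\phi$ only permits when the three values are all distinct, yielding a $3$-sunflower in $\calF$ (or, in the degenerate sub-cases where two of the indices coincide, contradicting injectivity properties of $\phi$). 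Then I invoke the already-established machinery (the passage from PMF families of size $(c-o(1))^N$ over $\Fset_p$ to lower bounds on the canonical tester, as used in Theorem~\ref{thm:F2Intro}): a PMF of size $(p^2)^{(1-\eps)n} = (p^{2})^{(1-\eps)n}$ in dimension $N = tn$ translates, after optimizing over $\eps$, to a super-polynomial lower bound in $1/\eps$.

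**Key steps, in order.** (1)~Construct the per-coordinate gadget $\phi:\Z_{p^2} \to (\Fset_p^t)^3$ and record its two needed properties: the ``all-equal or all-distinct'' sum-zero characterization, and the relevant injectivity on single components. (2)~Lift $\phi$ coordinatewise to turn a sunflower-free $\calF \subseteq \Z_{p^2}^n$ into a triple system over $\Fset_p^{tn}$. (3)~Prove the PMF property of the lifted system, splitting into the case where $v_1,v_2,v_3$ are pairwise distinct (reduce to a $3$-sunflower) and the cases where two coincide (reduce to a $2$-element obstruction ruled out by $\phi$). (4)~Compute the distance of the associated functions $f_1,f_2,f_3$ from $k$-cycle-freeness and the number of their $k$-cycles, then feed these into the canonical-tester lower bound argument, obtaining a lower bound of the form $(1/\eps)^{\omega(1)}$. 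Steps (3) and (4) can be stated as lemmas so that the proof of Theorem~\ref{thm:FpIntro} becomes a two-line deduction; much of step~(4) is identical to the $p=2$ argument and only the base of the exponential changes from $4$ to $p^2$.

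**Main obstacle.** The delicate point is the design and correctness of the gadget $\phi$ for general prime $p$: over $\Fset_2$ one has the convenient coincidence that ``$x_1+x_2+x_3=0$ with not all equal'' means exactly ``all distinct,'' but over $\Fset_p$ with $p \geq 3$ a single coordinate can have three values summing to zero in $\Fset_p$ without being all-equal or all-distinct (e.g.\ $0+0+0$ versus $1+1+(p-2)$), so $\phi$ must be engineered to \emph{forbid} precisely those mixed patterns while still realizing all-distinct triples — this is why the target group is $\Z_{p^2}$ rather than $\Z_p$, and getting the constant $t$ and the exact image of $\phi$ right, together with checking that no unintended cancellations occur across the $t$ coordinates of a block, is the technical heart of the proof. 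A secondary, milder issue is bookkeeping the ``two indices coincide'' degenerate cases in the PMF verification, which do not arise in the clean sunflower statement but must be handled to get a genuine perfect-matching-free family.
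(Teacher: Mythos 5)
Your overall strategy is exactly the paper's: a $3$-sunflower-free collection in $\Z_{p^2}^n$ of size $(p^2)^{(1-\eps)n}$ is converted by a coordinatewise gadget into a local PMF for triangles over $\Fset_p$ (the paper's Theorems~\ref{thm:ConsPMF} and~\ref{thm:Mainp^2}), and the PMF-to-lower-bound machinery (Lemma~\ref{lemma:LowerBound(m,n)}, Corollary~\ref{cor:PMF2LowerBound}) gives the super-polynomial bound once the PMF capacity tends to $p$. The genuine gap is that you never construct the gadget $\phi$, and that construction is where essentially all of the technical content lives (the paper's Lemma~\ref{lemma:constructionA} and Corollary~\ref{cor:constructionB}, specialized to $k=3$).

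Here is what the gadget is. Fix a primitive element $\beta$ of $\Fset_{p^2}^*$ and the natural $\Fset_p$-linear bijection $\mathrm{enc}\colon \Fset_{p^2}\to\Fset_p^2$. Identify $\Z_{p^2}$ with $\Fset_{p^2}$ and, for each $\alpha\in\Fset_{p^2}$, set
$\phi(\alpha) = \bigl(\mathrm{enc}(\alpha),\,\mathrm{enc}(\alpha\beta),\,-\mathrm{enc}(\alpha)-\mathrm{enc}(\alpha\beta)\bigr)$,
a triple in $\Fset_p^2$ with zero sum. The property you need is: for $\alpha\neq\alpha'$ and every non-empty proper $I\subsetneq\{1,2,3\}$, the mixed sum $\sum_{\ell\in I}\phi(\alpha)_\ell+\sum_{\ell\notin I}\phi(\alpha')_\ell$ is nonzero. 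To check it, WLOG $3\notin I$ (otherwise replace $I$ by its complement and swap $\alpha,\alpha'$); vanishing of the mixed sum plus the zero-column-sum of each triple would give $\alpha\sum_{\ell\in I}\beta^{\ell-1}=\alpha'\sum_{\ell\in I}\beta^{\ell-1}$ in $\Fset_{p^2}$, and the common factor is nonzero by $\Fset_p$-linear independence of $\{1,\beta\}$, forcing $\alpha=\alpha'$. One correction to your description: the gadget cannot realize ``mixed sum is $0$ iff the coordinate pattern is all-equal or all-distinct.'' If the mixed sum vanished on two all-distinct triples $(a,b,c)$ and $(a,b,d)$ with $c\neq d$, then $\phi(c)_3=\phi(d)_3$, and plugging the two-distinct triple $(c,c,d)$ gives $\phi(c)_1+\phi(c)_2+\phi(d)_3=\phi(c)_1+\phi(c)_2+\phi(c)_3=0$, a forbidden cancellation. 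Luckily your step~(3) only uses the one implication ``exactly two distinct local values $\Rightarrow$ local mixed sum nonzero,'' which is precisely what the construction above provides. With the gadget in hand, steps (2)--(4) and the capacity calculation (PMF of size $(p^2)^{(1-\eps)n}$ in dimension $2n$, capacity $p^{1-\eps}\to p$) go through exactly as in the paper.
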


Theorem~\ref{thm:FpIntro} implies that for every prime $p$, a refutation of a certain Weak Sunflower Conjecture over $\Z_D$ (for $D=p^2$) implies a super-polynomial lower bound on the query complexity of the canonical tester for triangle-freeness over $\Fset_p$. Therefore, the unlikely event that for {\em some} prime $p$ the query complexity is polynomial, implies Conjecture~\ref{conj:sunflowerD}. On the other hand, it was shown in~\cite{AlonSU13} that a conjecture of Coppersmith and Winograd, which was shown in~\cite{CoppersmithW90} to imply that the matrix multiplication exponent is $2$, implies that Conjecture~\ref{conj:sunflowerD} is false. Hence, the conjecture of~\cite{CoppersmithW90} implies, if true, a super-polynomial lower bound on the number of queries made by the canonical tester for testing triangle-freeness over $\Fset_p$ for every prime $p$.

We note that for $p=3$ one can show a stronger statement than that of Theorem~\ref{thm:FpIntro}. Indeed, in this case a super-polynomial lower bound follows quite easily from a refutation of the Weak Sunflower Conjecture over $\Z_3$ (which can be only weaker than its refutation over $\Z_9$; see Section~\ref{sec:sunfPMF}). Interestingly, Alon et al.~\cite{AlonSU13} studied a variant of this conjecture, called the {\em Multicolored} Sunflower Conjecture over $\Z_3$, and related it to the notion of {\em strong} uniquely solvable puzzles. It turns out that this multicolored conjecture coincides with our question on perfect-matching-free families over $\Fset_3$, and that their results imply an (unconditional) lower bound of $1/\eps^{7.298}$ on the query complexity of the canonical tester for triangle-freeness over $\Fset_3$. Moreover, we use a result of~\cite{AlonSU13} and the connection observed here to obtain that if the conjecture of~\cite{CohnKSU05} that strong USPs of optimal size exist is true, then the query complexity of the canonical tester for triangle-freeness over $\Fset_3$ is super-polynomial. This gives, in a sense, an affirmative answer to a question posed in~\cite{FuK13}.

The above results are also extended to testing $k$-cycle-freeness for every $k \geq 3$. We show how lower bounds on the query complexity of the canonical tester for $k$-cycle-freeness over $\Fset_p$ follow from the existence of certain collections of vectors in $\Z_D^n$ for an appropriate choice of $D$. Namely, we are interested in collections of vectors in $\Z_D^n$ for $D = p^{k-1}$, satisfying that for every $k$ vectors in the collection (not all equal) there is some $i \in [n]$ for which the $k$ vectors have exactly {\em two} distinct symbols in their $i$th entries. Notice that for $k=3$ this simply means that the collection contains no $3$-sunflowers. As before, for vector collections of optimal size $(D-o(1))^n$, the obtained lower bound on the query complexity turns out to be super-polynomial (see Section~\ref{sec:sunfPMF}).

Finally, we show that the lower bound of Fu and Kleinberg~\cite{FuK13} on testing triangle-freeness over $\Fset_2$ can be generalized to testing $k$-cycle-freeness over $\Fset_p$.
\begin{theorem}\label{thm:GenkIntro}
For every $k \geq 3$ and a prime $p$, the query complexity of the canonical tester for $k$-cycle-freeness over $\Fset_p$ for distance $\eps$ is $\Omega(1/\eps^{g(k)-o(1)})$ for
$$g(k) = \frac{k-1-H(1/k) / \log_2{p}}{1-H(1/k) / \log_2{p}},$$
where $H$ stands for the binary entropy function.
\end{theorem}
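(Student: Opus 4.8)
The plan is to generalize the Fu--Kleinberg argument, which deduces a lower bound on the canonical tester for triangle-freeness from a uniquely solvable puzzle, to the setting of $k$-cycles over $\Fset_p$. The starting observation is that the $g(k)$ exponent should come from two ingredients: (i) a probabilistic construction of large $k$-dimensional uniquely solvable puzzles, and (ii) a translation of a USP into a perfect-matching-free family (PMF) of vectors in $\Fset_p^n$, followed by the standard conversion of a PMF into a tester lower bound. For step (i) I would define a $k$-dimensional USP over an alphabet of size $k$ as a collection of $k$-tuples (rows of a matrix with $k$ columns, entries a partition of $[k]$ into $k$ blocks of sizes summing appropriately) with the unique-solvability property: the only way to assemble the rows into a consistent ``cyclic'' arrangement is the trivial one. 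Coppersmith--Winograd implicitly give, for $k=3$, a random construction of size $(3/2^{2/3}-o(1))^n$; the natural generalization picks $n$ columns i.i.d.\ uniformly from all balanced assignments and shows that after deleting one endpoint of each ``bad'' pair one retains a USP of size roughly $\bigl(k / 2^{H(1/k)}\bigr)^{n(1-o(1))}$ --- here $H(1/k)$ arises because a random index where two of the $k$ rows collide in a prescribed pattern occurs with probability $2^{-H(1/k)}$ per coordinate (more precisely, the multinomial weight $\binom{k}{1,1,\dots}/k^k$ governs the deletion rate, and $2^{-H(1/k)}$ is the relevant exponential). I would carry out the first-moment / alteration argument carefully enough to pin down the exponent but defer the routine multinomial estimates.

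Next I would record the USP-to-PMF conversion over $\Fset_p$. Given a $k$-dimensional USP of size $m$ on alphabet $[k]$ with $n$ columns, one embeds each symbol into $\Fset_p$ using vectors that sum to zero exactly when the symbols are the same, and uses extra ``padding'' coordinates to kill the non-trivial solutions that the USP property rules out combinatorially; the key point is that $p$ need not divide $k$, and the embedding costs a factor that is exponential in the number of coordinates but polynomial per coordinate, so the PMF one gets in $\Fset_p^N$ has size $m = (c_k)^{n}$ with $N = n \cdot O(\log_p k)$, hence size $(c_k)^{N/O(\log_p k)}$. Tracking the constants, the PMF has size $p^{\alpha N}$ for $\alpha = \log_p\!\bigl(k/2^{H(1/k)}\bigr) / \log_p k = 1 - H(1/k)/\log_2 p \cdot 1/\log_2 k$ --- wait, the cleaner bookkeeping is to say the USP lives natively in a space of size $k^n$ and gives a PMF of the same cardinality there, and then one reinterprets ``domain size $k^n$'' versus ``number of disjoint $k$-cycles $m$'' directly in the tester-lower-bound formula, so the prime $p$ enters only through how efficiently $\Z_k$-valued coordinates are simulated by $\Fset_p$-valued ones, contributing the $\log_2 p$ in the denominator of $g(k)$.

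The final step is the standard PMF $\Rightarrow$ tester lower bound, which I would quote or re-derive in the $k$-cycle setting: if $f_1,\dots,f_k : \Fset_p^N \to \{0,1\}$ have $m$ pairwise-disjoint $k$-cycles and support sizes $\Theta(mk)$, then on one hand they are $\eps$-far from $k$-cycle-free with $\eps = \Theta(m / p^N)$ (each of the $m$ disjoint cycles forces at least one change), and on the other hand a random $(k-1)$-tuple of queries hits a fixed cycle with probability $\Theta(p^{-(k-1)N})$, so the canonical tester needs $\Omega(p^{(k-1)N}/m)$ queries to find one. Combining the two and eliminating $N$ in favor of $\eps$ yields a bound of the form $\Omega\bigl((1/\eps)^{h}\bigr)$ where $h = \frac{(k-1)\log p - \log(\text{size rate})}{\log p - \log(\text{size rate})}$, and plugging in the size rate from the USP construction gives exactly $g(k) = \frac{k-1 - H(1/k)/\log_2 p}{1 - H(1/k)/\log_2 p}$, with the $-o(1)$ absorbing the alteration losses and the sub-exponential slack in the puzzle size.

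I expect the main obstacle to be step (i): getting the generalization of the Coppersmith--Winograd construction of uniquely solvable puzzles to work for general $k$ with the right exponent. For $k=3$ the unique-solvability condition is a clean combinatorial statement, but for larger $k$ one must correctly formulate which ``collision patterns'' among the $k$ rows a USP must forbid (exactly one coordinate with precisely two distinct symbols, matching the vector-collection notion highlighted earlier in the paper), and then verify that a random balanced matrix, after the alteration step, genuinely satisfies it --- this requires a union bound over all non-trivial $k$-tuples of rows and a careful count of the probability that such a tuple is ``consistent,'' which is where the binary-entropy term is forced. A secondary, more routine obstacle is making the $\Z_k \to \Fset_p$ simulation tight enough that no spurious constant leaks into the exponent; I would handle this by working directly in the $\Z_k$-coordinate model for as long as possible and only converting to $\Fset_p$ at the very end via a product-construction padding argument.
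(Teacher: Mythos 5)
Your high-level plan (generalize Coppersmith--Winograd $\Rightarrow$ PMF over $\Fset_p$ $\Rightarrow$ tester lower bound, with the exponent $g(k)$ emerging from the interplay of the multinomial entropy and the PMF-to-tester formula) is the right one and matches the paper. But there is a genuine gap in step (i), and it is exactly the step you flag as the main obstacle: you describe the construction as ``pick $n$ columns i.i.d.\ uniformly from balanced assignments, delete one endpoint of each bad pair.'' A plain first-moment alteration argument of that form does \emph{not} reach the exponent $2^{H(1/k)}$; the number of ``bad'' $k$-tuples among random balanced vectors is too large, and the alteration step would eat into the exponent. The crucial missing ingredient is Behrend's construction of dense sets with no $3$-term arithmetic progressions --- more precisely, Alon's extension (quoted in the paper as Lemma~\ref{lemma:Behrend}) giving a set $B\subseteq[m]$ of size $m^{1-o(1)}$ with no non-trivial solutions to $x_1+\cdots+x_{k-1}=(k-1)x_k$. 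The paper's construction first chooses random affine hash maps $\beta_1,\dots,\beta_k:\calI\to\Fset_M$ (with $\sum_{j<k}\beta_j(I_j)=(k-1)\beta_k(I_k)$ forced algebraically whenever $(I_1,\dots,I_k)$ partitions $[N]$), then retains only partitions whose hash values land in the Behrend set $B$, and \emph{only then} does the alteration step inside each bucket $L_i$. Without the Behrend bucketing, nothing forces the hash values $b_1,\dots,b_k$ of an offending $k$-tuple to be all equal, so the ``collision removal'' step cannot kill all violations of unique solvability while keeping the density $2^{H(1/k)N}$. You should also note that the quantity $t_1\leq\cdots\leq t_\ell$ bookkeeping in the alteration (the $(t_1,\ldots,t_\ell)$-similarity cases, and the choice of the hash range $M$ so that the sum $\sum_{\ell\geq 2} k^{O(k)}\binom{n(k-\ell+1)}{n,\dots,n}M^{-(k-\ell)}$ stays below $1/2$) is where the general-$k$ argument has real content beyond the $k=3$ case; your proposal gestures at this but does not carry it out.

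Two smaller points on bookkeeping. First, the USP cardinality you quote, $(k/2^{H(1/k)})^{n}$, is not the right number: the correct capacity is $2^{H(1/k)}$ per coordinate of the ambient space (for $k=3$ this is $3/2^{2/3}$, whereas $k/2^{H(1/k)}$ gives $2^{2/3}$). Feeding the wrong value into the PMF-to-tester formula would not produce $g(k)$; your final expression comes out right only because you work backwards from the target. Second, your discussion of the $\Z_k\to\Fset_p$ conversion is needlessly complicated: no padding or $\log_p k$ factor is needed. The paper's conversion simply maps each $v\in\calF\subset\Z_k^{N}$ to the $k$-tuple of $\{0,1\}$-vectors consisting of the characteristic vectors of $v|_1,\ldots,v|_{k-1}$ and the negation of the characteristic vector of $[N]\setminus v|_k$; these sum to zero over $\Z$ (hence over any $\Fset_p$), and the partition property of $\calF$ gives exactly the local PMF condition. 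The dependence on $p$ in $g(k)$ then enters solely through the $\log_p m$ term in Lemma~\ref{lemma:LowerBound(m,n)}, not through any simulation overhead.
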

\noindent
The proof of Theorem~\ref{thm:GenkIntro} relies on a delicate extension of the construction of Coppersmith and Winograd~\cite{CoppersmithW90} of uniquely solvable puzzles, which is based on a construction of Behrend~\cite{Behrend46}, which found great interest in additive combinatorics. Interestingly, our construction requires an extension of Behrend's result, given in~\cite{Alon02}, that was used there for proving lower bounds on testing $H$-freeness of graphs.

\subsection{Outline}
The rest of the paper is organized as follows. In Section~\ref{sec:preli} we provide a background on the $k$-cycle-freeness problem, relate its one-function and multiple-function variants, present the notion of perfect-matching-free families of vectors (PMFs), and show how they imply lower bounds on the query complexity of the problem. In Section~\ref{sec:sunfPMF} we prove that PMFs can be constructed using certain collections of vectors in $\Z^n_D$ and derive relations to sunflower conjectures, including Theorem~\ref{thm:FpIntro}. Finally, in Section~\ref{sec:Genk}, we prove Theorem~\ref{thm:GenkIntro}.

\section{Testing $k$-Cycle-Freeness of Boolean Functions}\label{sec:preli}

Let $n$ and $k \geq 3$ be integers, and let $\Fset_p$ be the finite field of prime order $p$. A {\em $k$-cycle} of $k$ functions $f_1,\ldots,f_k: \Fset_p^n \rightarrow \{0,1\}$ is defined as $k$ vectors $x_1,\ldots,x_k \in \Fset_p^n$ satisfying $x_1+\cdots+x_k = 0$ and $f_i(x_i)=1$ for every $1 \leq i \leq k$. If a $k$-tuple of functions $(f_1,\ldots,f_k)$ has no $k$-cycles, we say that it is {\em $k$-cycle-free}. Its {\em distance} from $k$-cycle-freeness is defined as $$\min_{(g_1,\ldots,g_k)}{\sum_{i=1}^{k}{{\dist}(f_i,g_i)}},$$
where the minimum is over all the $k$-cycle-free $k$-tuples of functions $(g_1,\ldots,g_k)$, and $\dist(f,g)$ denotes the fraction of points at which the functions $f$ and $g$ disagree. We say that a $k$-tuple of functions is {\em $\eps$-far} from $k$-cycle-freeness if its distance from $k$-cycle-freeness is at least $\eps$.

In the property testing problem of $k$-cycle-freeness over $\Fset_p$, the input is a $k$-tuple of functions $f_1,\ldots,f_k: \Fset_p^n \rightarrow \{0,1\}$, and the goal is to accept $k$-cycle-free $k$-tuples of functions with probability at least $2/3$ and to reject $k$-tuple of functions which are $\eps$-far from $k$-cycle-freeness with probability at least $2/3$. The {\em canonical tester} for $k$-cycle-freeness over $\Fset_p$ repeatedly picks uniformly and independently $k$ vectors with zero sum and checks if they form a $k$-cycle of the input functions. If no $k$-cycle is found the tester accepts and otherwise it rejects.

\subsection{Multiple-function vs. One-function}\label{sec:multiple}

As mentioned before, one might consider the one-function variant of the $k$-cycle-freeness testing problem. A {\em $k$-cycle} of a function $f:\Fset_p^n \rightarrow \{0,1\}$ is defined as $k$ vectors that sum to the zero vector and are all mapped by $f$ to $1$. The input of the one-function variant is a single function $f:\Fset_p^n \rightarrow \{0,1\}$, and the goal is to decide if $f$ is $k$-cycle-free or $\eps$-far from every $k$-cycle-free function. The canonical tester for $k$-cycle-freeness is naturally extended to the one-function case.

We observe that whenever $p$ does not divide $k$, every $k$-tuple of functions can be transformed to a single function with the same number of $k$-cycles, a similar domain size, and a similar distance from $k$-cycle freeness. This implies that, in this case, the canonical testers for the multiple-function and the one-function variants of the problem have essentially the same query complexity.

\begin{lemma}\label{lemma:p_not_div_k}
Let $n$ be a positive integer, let $k \geq 3$ and $p$ be fixed integers such that $p$ is a prime that does not divide $k$, and let $\alpha >0$ be a real number.
Suppose that the $k$-tuple of functions $f_1, \ldots, f_k : \Fset_p^n \to \{0,1\}$ is $\eps_1$-far from $k$-cycle-freeness and that the canonical tester for $k$-cycle-freeness needs to make $q=\Omega(1/\eps_1^\alpha)$ queries to $(f_1, \ldots, f_k)$. Then, there exists a function $f: \Fset_p^{n+k-1} \rightarrow \{0,1\}$, such that $f$ is $\eps_2$-far from $k$-cycle-freeness for $\eps_2 = \eps_1/p^{k-1}$, and the canonical tester needs to make $\Omega(1/\eps_2^\alpha)$ queries to $f$.
\end{lemma}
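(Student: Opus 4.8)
The plan is to embed the $k$ separate functions $f_1,\dots,f_k$ into a single function $f$ on a slightly larger domain by tagging the inputs of $f_i$ with a ``color'' that records which of the $k$ roles the point is playing, and then to choose these tags so that a zero-sum $k$-cycle of $f$ is forced to use exactly one point of each color. Concretely, I would append $k-1$ new coordinates and pick vectors $t_1,\dots,t_k\in\Fset_p^{k-1}$ with $t_1+\cdots+t_k=0$ but with the property that no other way of choosing $k$ of the $t_i$'s (with repetition) sums to zero; for instance take $t_i=e_i$ for $i<k$ and $t_k=-(e_1+\cdots+e_{k-1})$, and verify that because $p\nmid k$ the only multiset $\{t_{i_1},\dots,t_{i_k}\}$ summing to $0$ is $\{t_1,\dots,t_k\}$ (the all-$k$-equal solutions $k\cdot t_i=0$ are ruled out exactly here, since $k t_i = 0$ would need $p\mid k$ when $t_i\ne 0$, and $t_k\ne 0$ too). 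Then define $f:\Fset_p^{n+k-1}\to\{0,1\}$ by $f(x,t_i)=f_i(x)$ for each $i$, and $f(x,s)=0$ for every $s\notin\{t_1,\dots,t_k\}$.

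The first thing to check is the bijection between $k$-cycles: if $(x_1,t_{i_1}),\dots,(x_k,t_{i_k})$ is a $k$-cycle of $f$, then all the tags must lie in $\{t_1,\dots,t_k\}$ (else $f$ vanishes there), and summing the last $k-1$ coordinates to zero forces $\{i_1,\dots,i_k\}=\{1,\dots,k\}$ by the choice of the $t_i$; the first $n$ coordinates then give $x_1+\cdots+x_k=0$ with $f_j(x_{\sigma(j)})=1$, i.e.\ exactly a $k$-cycle of $(f_1,\dots,f_k)$. Conversely every $k$-cycle of the tuple lifts. So $k$-cycles of $f$ are in bijection with $k$-cycles of the tuple, up to the $k!$ orderings, and in particular $f$ is $k$-cycle-free iff the tuple is.

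Next I would analyze the distance. The domain of $f$ has size $p^{n+k-1}$, and the ``active'' part (where $f$ can be nonzero) consists of $k$ copies of $\Fset_p^n$, so $k\cdot p^n$ points out of $p^{n+k-1}$. To repair $f$ one may, without loss, leave the identically-zero part alone and only flip values on the active part; such repairs correspond exactly to repairs of the tuple $(f_1,\dots,f_k)$ (one must also check that making the tuple $k$-cycle-free makes $f$ $k$-cycle-free, which follows from the bijection above). Hence the minimum number of point-changes needed for $f$ equals the minimum needed for the tuple, which is at least $\eps_1\cdot p^n$; dividing by the domain size $p^{n+k-1}$ gives distance at least $\eps_1 p^n / p^{n+k-1} = \eps_1/p^{k-1} = \eps_2$, so $f$ is $\eps_2$-far from $k$-cycle-freeness. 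Finally, for the query complexity: a round of the canonical tester on $f$ picks a uniform zero-sum $k$-tuple of points in $\Fset_p^{n+k-1}$; conditioned on the (constant, $p$-dependent but positive) probability that all $k$ tags land in $\{t_1,\dots,t_k\}$ in the zero-summing pattern, the round simulates a round of the canonical tester on the tuple, and otherwise it trivially accepts. So the detection probability per query on $f$ is smaller than that on the tuple by at most a constant factor depending only on $p$ and $k$, whence $\Omega(1/\eps_1^\alpha) = \Omega(1/\eps_2^\alpha)$ queries are still required (the polynomial relationship $\eps_2 = \Theta(\eps_1)$ absorbs the constant into the $\Omega$).

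The main obstacle is the distance lower bound: one must argue that an optimal repair of $f$ never gains anything by modifying the identically-zero region, so that repairs of $f$ and repairs of the tuple are genuinely in correspondence. I would handle this by the exchange argument sketched above — any repair that sets some zero-region point to $1$ can only create new $k$-cycles or be useless, so deleting those flips cannot increase the distance — making the equality of repair costs (and hence $\eps_2 = \eps_1/p^{k-1}$ exactly, not merely up to constants) clean. Everything else is bookkeeping about the combinatorics of the tags $t_1,\dots,t_k$, where the hypothesis $p\nmid k$ is used precisely to kill the degenerate all-equal solutions.
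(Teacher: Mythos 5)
Your construction is exactly the one in the paper: tag each $f_i$ with $e_i$ for $i<k$ and $f_k$ with $-(e_1+\cdots+e_{k-1})$, use $p\nmid k$ to force every zero-sum choice of tags to be the multiset $\{t_1,\dots,t_k\}$, and conclude a bijection of $k$-cycles, a $p^{k-1}$ domain dilation of the distance, and a constant-factor drop in per-round detection probability. The argument matches the paper's proof (the paper states the distance transfer more tersely, whereas you spell out the ``repairs never benefit from touching the zero region'' step, but the route is the same).
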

\begin{proof}
Given the $k$-tuple of functions $(f_1, \ldots, f_k)$, define $f: \Fset_p^{n+k-1} \to \{0,1\}$ as follows. For all $y\in \Fset_p^n$ and $z\in \Fset_p^{k-1}$, let
\[
f(y, z)=
\begin{cases}
f_i(y), &\text{if $z=e_i$ for $1\leq i\leq k-1$,}\\
f_k(y), &\text{if $z=-e_1-\cdots -e_{k-1}$,}\\
0, &\text{otherwise,}
\end{cases}
\]
where $e_i$ denotes the vector whose entries are all $0$ except the $i$th which is $1$.

First, observe that the only way to choose $k$ vectors (repetitions are allowed) from the set
$$\{e_1,\ldots,e_{k-1},-e_1-\cdots-e_{k-1}\},$$
so that their sum is the zero vector over $\Fset_p$, is to choose each of the vectors exactly once (because $p$ does not divide $k$).
This implies that all the $k$-cycles of $f$ have exactly one point in each of the subfunctions $f_1,\ldots,f_k$. Hence there exists a bijection between the $k$-cycles of $f_1,\ldots,f_k$ and those of $f$. Since $(f_1,\ldots,f_k)$ is $\eps_1$-far from $k$-cycle-freeness, it follows that $f$ is $\eps_2$-far from $k$-cycle-freeness for $\eps_2 = \eps_1 / p^{k-1}$.

Let $N_{\text{cycles}}$ be the number of $k$-cycles of $(f_1, \ldots, f_k)$ and of $f$. Since the query complexity of the canonical tester on a $k$-tuple of functions (resp. function) is proportional to the inverse of the number of $k$-cycles of the input $k$-tuple (resp. function), the query complexity on $f$ is $\Omega(q')$ for
\[
q'=p^{(n+k-1)(k-1)}/N_{\text{cycles}}=\Theta(p^{n(k-1)}/N_{\text{cycles}})=\Theta(q)=
\Omega(1/\eps_1^\alpha)=\Omega(1/\eps_2^\alpha).
\]
\end{proof}

In case that the prime $p$ divides $k$, the one-function variant of $k$-cycle-freeness over $\Fset_p$ is quite easy. The reason is that in this case every vector in the support of a function $f:\Fset_p^n \rightarrow \{0,1\}$, taken with multiplicity $k$, forms a $k$-cycle of $f$. Thus, the problem reduces to deciding if the input function is the zero constant function or is $\eps$-far from it. The tester that given a function $f$ picks uniformly and independently $O(1/\eps)$ random vectors in $\Fset_p^n$ and accepts if and only if they are all mapped by $f$ to $0$ implies the following.

\begin{claim}\label{claim:p_div_k_}
Let $k \geq 3$ be an integer divisible by a prime $p$. Then, for every $\eps >0$, there is a one-sided error tester for the one-function variant of $k$-cycle-freeness over $\Fset_p$ for distance $\eps$ with query complexity $O(1/\eps)$.
\end{claim}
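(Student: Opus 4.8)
The plan is to reduce the one-function $k$-cycle-freeness problem to the trivial task of testing whether the input is the all-zeros function, and then to analyze the naive sampling tester. First I would record the structural fact alluded to above: since the prime $p$ divides $k$, for any $x \in \Fset_p^n$ with $f(x)=1$ the tuple $(x,x,\ldots,x)$ consisting of $k$ copies of $x$ satisfies $x+\cdots+x = kx = 0$ in $\Fset_p^n$ and has all of its coordinates mapped to $1$ by $f$; hence it is a $k$-cycle of $f$. Consequently, $f$ is $k$-cycle-free if and only if its support is empty, i.e.\ $f \equiv 0$. In particular the all-zeros function is the \emph{unique} $k$-cycle-free function on $\Fset_p^n$, so being ``$\eps$-far from every $k$-cycle-free function'' is the same as being ``$\eps$-far from the all-zeros function'', which means that $f$ takes the value $1$ on at least $\eps \cdot p^n$ of its inputs.

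Next I would define and analyze the tester: given $f$, pick $m = \ceil{\ln 3 / \eps} = O(1/\eps)$ points of $\Fset_p^n$ independently and uniformly at random, accept if all of them are mapped to $0$ by $f$, and reject otherwise. For one-sided error, observe that if $f$ is $k$-cycle-free then by the above $f \equiv 0$, so every sampled point is mapped to $0$ and the tester accepts with probability $1$. For soundness, suppose $f$ is $\eps$-far from $k$-cycle-freeness; then, as just noted, a uniformly random point of $\Fset_p^n$ is mapped to $1$ with probability at least $\eps$, so the probability that all $m$ independent samples are mapped to $0$ is at most $(1-\eps)^m \le e^{-\eps m} \le e^{-\ln 3} = 1/3$. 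Hence the tester rejects $\eps$-far inputs with probability at least $2/3$, and its query complexity is $m = O(1/\eps)$, as required.

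The argument involves no genuine obstacle; the only point that deserves care is the verification that the all-zeros function is truly the only $k$-cycle-free function, so that the distance-from-the-property notion collapses to distance from a single fixed function. This is exactly where the hypothesis $p \mid k$ enters, together with the fact that a $k$-cycle is permitted to repeat vectors.
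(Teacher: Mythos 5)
Your proof is correct and follows exactly the paper's intended argument: since $p \mid k$, every point of the support of $f$ yields the trivial $k$-cycle $(x,\ldots,x)$, so the all-zeros function is the unique $k$-cycle-free function, and the claim reduces to the standard $O(1/\eps)$-query one-sided test for being identically zero. The explicit computation $(1-\eps)^m \le e^{-\eps m} \le 1/3$ is the routine completion the paper leaves implicit.
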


One may ask if a similar result can be shown once we consider only {\em non-trivial} cycles of $f$, that is, $k$ vectors, not all equal, that sum to zero and are all mapped by $f$ to $1$. It turns out that if $p$ divides $k$, $O(1/\eps)$ queries are still sufficient to decide if a given function $f:\Fset_p^n \rightarrow \{0,1\}$ is free of non-trivial $k$-cycles or $\eps$-far from this property. The reason is that the density of such functions turns out to be very small, as follows from the following (special case of a) theorem of Liu and Spencer~\cite{LiuS09}.

\begin{theorem}[\cite{LiuS09}]\label{thm:MeshulamGen}
For every $n$ and a prime $p \geq 3$, if $A \subseteq \Fset_p^n$ contains no $p$ vectors, not all equal, whose sum is the zero vector, then $|A| = o(p^n)$.
\end{theorem}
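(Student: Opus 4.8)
The plan is to reduce the statement to Meshulam's theorem (the $p$-analogue of Roth's theorem in $\Fset_p^n$, i.e. the $D=3$ case discussed around Conjecture~\ref{conj:weakD}), by a straightforward averaging/restriction argument. Recall Meshulam's result: any $B \subseteq \Fset_p^n$ with no nontrivial three-term arithmetic progression (equivalently, no three vectors summing to zero, not all equal) has density $o(1)$; quantitatively $|B| = O(p^n/n)$. The claim to prove is the analogous statement where ``$3$ vectors'' is replaced by ``$p$ vectors''. The key observation is that a solution to $x_1 + \cdots + x_p = 0$ in $\Fset_p^n$ with the $x_i$ not all equal can always be found \emph{inside} a set $A$ of positive density by first passing to a two-dimensional structure and then invoking the cap-set-type bound.

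First I would argue as follows. Suppose $A \subseteq \Fset_p^n$ has density $\delta = |A|/p^n$ bounded below by a constant. Consider the $p$-fold sum: for uniformly random $a_1,\dots,a_{p-1} \in A$, ask whether $-(a_1+\cdots+a_{p-1}) \in A$ as well, which would give a $p$-cycle; one must additionally ensure the $p$ vectors are not all equal, but the all-equal solutions number only $|A|$, which is negligible compared with the count of all solutions once we show the latter is $\Theta(\delta^p p^{n(p-1)})$ or even just super-linear in $p^n$. Rather than estimate this sum directly (which would require control of the Fourier bias of $A$ and is exactly the content one would like to avoid), the cleaner route, and the one I expect Liu and Spencer take, is a \emph{density increment via a linear slice}. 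Pick a random affine line $\ell = \{u + tv : t \in \Fset_p\}$ in $\Fset_p^n$; in expectation it meets $A$ in $\delta p$ points. By a second-moment or convexity argument there is a line meeting $A$ in at least $\delta p$ points, hence — since a line has only $p$ points — in a set $S \subseteq \Fset_p$ of size $\geq \delta p$ under the identification $t \leftrightarrow u+tv$. Now any $p$ elements $t_1,\dots,t_p \in \Fset_p$ with $t_1 + \cdots + t_p \equiv$ (the right constant) mod $p$ pull back to vectors $u+t_iv$ summing to $pu + (\sum t_i) v = (\sum t_i) v$, so one wants $\sum t_i \equiv 0$; and on a slice through a fixed pair $a, a' \in A$ one can set up the line so that the needed arithmetic identity among the $t_i$'s becomes solvable with the $t_i$ not all equal as soon as $|S|$ exceeds an absolute constant — here one uses that in $\Fset_p$ itself, every sufficiently dense subset contains a nontrivial solution to $t_1 + \cdots + t_p = 0$ (indeed $S = \Fset_p \setminus \{c\}$ already works for any single omitted value, by an explicit choice). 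This is where the argument genuinely needs $p \geq 3$: for $p = 2$ a two-element field has no room for this.

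The step I expect to be the main obstacle is making the reduction \emph{quantitative enough} to get $|A| = o(p^n)$ rather than merely ``$|A| \geq \delta p^n$ is impossible for $\delta$ close to $1$''. A single random-line argument only rules out $A$ of density exceeding a fixed threshold; to push the density all the way down to $o(1)$ one needs an iterative density-increment scheme — pass to a subspace (or coset structure) on which $A$ has increased relative density, repeat — mirroring the Roth/Meshulam proof, and terminate when the ambient dimension is too small to host the structure one is exploiting. Equivalently, and perhaps more cleanly, one reduces directly to Meshulam: observe that if $A$ had density $\delta$ with $n$ large, then by pigeonhole $A$ contains a translate of a large subset of a suitable subspace $W \cong \Fset_p^m$ with $m \to \infty$, within which one only needs a \emph{three}-term progression $x, x+d, x+2d$ together with $p-3$ further points chosen trivially (e.g. $p-3$ copies of a fixed point combined so their contribution cancels) to assemble a nontrivial $p$-cycle; then Meshulam's $O(p^m/m)$ bound forces the relative density inside $W$ to vanish, contradicting $\delta$ being constant once $m$ is large. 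Assembling the ``$p-3$ trivial points'' so that the total sum is zero while keeping the whole tuple non-constant is the delicate bookkeeping, but it is a finite computation in $\Fset_p$ and causes no real difficulty for $p \geq 3$. The overall structure is therefore: (i) slice/restrict to reduce to a low-complexity additive configuration; (ii) invoke Meshulam's theorem on the restricted structure; (iii) handle the trivial (all-equal) solutions separately as a lower-order term; (iv) iterate or pigeonhole to convert the constant-density contradiction into the clean bound $|A| = o(p^n)$.
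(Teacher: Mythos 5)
The paper does \emph{not} prove this statement: it cites it as a special case of a theorem of Liu and Spencer~\cite{LiuS09}, whose own argument is (as far as I recall) a Fourier-analytic density increment generalizing Meshulam's proof to the equation $x_1+\cdots+x_k=0$ in $\Fset_p^n$. So there is no in-paper proof to compare against; I will evaluate your sketch on its own terms.

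Your second route --- reduce to Meshulam by padding a $3$-term arithmetic progression out to a $p$-tuple --- is the right idea, and is in fact a cleaner and shorter derivation than a direct Fourier argument. But two things need tightening. First, the pigeonhole/translate-to-a-subspace detour is superfluous: if $A\subseteq\Fset_p^n$ has density bounded below, Meshulam's bound applies to $A$ in $\Fset_p^n$ directly; there is no need to first localize to some $W\cong\Fset_p^m$. Second, and more substantively, the padding step is left vague and one natural reading of it is wrong: if ``their contribution cancels'' means the $p-3$ extra points sum to zero on their own, they would all have to equal $0$, which need not lie in $A$; and the $3$-AP $x,\,x+d,\,x+2d$ by itself has sum $3(x+d)\neq 0$ for $p\neq 3$, so those extra points must cancel \emph{that}. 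The choice that works is to append $p-3$ copies of the middle term $y=x+d$, which is already in $A$. Then the $p$-tuple consisting of $x$, of $z=x+2d$, and of $p-2$ copies of $y$ has all entries in $A$, sums to $x+z+(p-2)y=2y+(p-2)y=py=0$, and is non-constant because $d\neq 0$ forces $x\neq y$ (here $p\geq 3$ is exactly what guarantees there is room for both an $x$ and a $y$ in the tuple). Contrapositively, a set with no nontrivial zero-sum $p$-tuple has no nontrivial $3$-term AP, so Meshulam gives $|A|=O(p^n/n)=o(p^n)$. Your first route (random lines plus iterated density increment) is essentially re-deriving Meshulam from scratch and is neither needed nor carried out; drop it.
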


\noindent
Using the above theorem, it can be easily observed that if $p$ divides $k$ and $f:\Fset_p^n \rightarrow \{0,1\}$ is free of non-trivial $k$-cycles, then it is $o(1)$-close to the zero constant function. Thus, by the same tester that was used for Claim~\ref{claim:p_div_k_}, we get query complexity $O(1/\eps)$ and an ``almost'' one-sided error, that is, functions that are free of non-trivial $k$-cycles are accepted with probability that tends to $1$ where $n$ tends to infinity.

\subsection{Perfect-Matching-Free Families}

We now define the notion of {\em local perfect-matching-free} vector families, which can be used to obtain lower bounds on the query complexity of the canonical tester for $k$-cycle-freeness over $\Fset_p$.
\begin{definition}
An $(n,m)$ {\em local perfect-matching-free family (PMF) for $k$-cycles over $\Fset_p$} is a collection $$\{(x^{(1)}_i,x^{(2)}_i,\ldots,x^{(k)}_i)\}_{i \in [m]},$$ such that for every $i \in [m]$, $x^{(1)}_i,x^{(2)}_{i},\ldots,x^{(k)}_i$ are $k$ vectors in $\Fset_p^n$ whose sum is zero, and for every $i_1,i_2,\ldots,i_k \in [m]$, if the sum of the vectors $x^{(1)}_{i_1},x^{(2)}_{i_2},\ldots,x^{(k)}_{i_k}$ is zero then $i_1 = i_2 = \cdots = i_k$.
The {\em local PMF capacity for $k$-cycles over $\Fset_p$} is the largest constant $c$ for which there exist $(n,(c-o(1))^n)$ local PMFs for $k$-cycles over $\Fset_p$ for infinitely many values of $n$.
\end{definition}

Two remarks are in order.

\begin{remark}\label{remark:limit}
If the local PMF capacity for $k$-cycles over $\Fset_p$ is $c$, then there exist $(n,(c-o(1))^n)$ local PMFs for $k$-cycles over $\Fset_p$ for {\em every} sufficiently large value of $n$ (and not only for infinitely many of them). To see this, for every $n$, denote by $m_n$ the largest integer for which there exists an $(n,m_n)$ local PMF for $k$-cycles over $\Fset_p$. Since $m_{n+n'} \geq m_n \cdot m_{n'}$, we may apply Fekete's lemma (see, e.g.,~\cite[Lemma~11.6]{BookFekete}) to show that the limit of $m_n^{1/n}$, as $n$ tends to infinity, exists and equals the capacity $c$.
\end{remark}
\begin{remark}
Our definition of {\em local} PMFs is slightly different from the definition of PMFs given in~\cite{BX10}. Namely, the requirement in the definition of PMFs in~\cite{BX10} is that for every $k$ permutations $\pi_1,\ldots,\pi_k$ of $[m]$, either $\pi_1 = \cdots = \pi_k$, or there exists an $i \in [m]$ for which the sum $x_{\pi_1(i)}^{(1)}+ \cdots+x_{\pi_k(i)}^{(k)}$ is nonzero. Clearly, every local PMF is a PMF. Whereas the other direction does not hold, it is easy to see that the local PMF capacity for $k$-cycles over $\Fset_p$ equals the PMF capacity for $k$-cycles over $\Fset_p$. For completeness, we include a short proof (which resembles that of~\cite[Proposition~6.3]{CohnKSU05}), and throughout the paper we prefer to consider the notion of local PMFs, mainly for simplicity of presentation.
\end{remark}

\begin{claim}
The local PMF capacity for $k$-cycles over $\Fset_p$ equals the PMF capacity for $k$-cycles over $\Fset_p$.
\end{claim}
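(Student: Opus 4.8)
The plan is to establish the two inequalities between the capacities separately. One direction is immediate: every local PMF is in particular a PMF (taking the permutations $\pi_1,\dots,\pi_k$ to be a mixture of the ``shift'' permutations is already covered by the quantifier over all $k$-tuples $i_1,\dots,i_k$, and in fact any PMF in the sense of~\cite{BX10} that fails the local condition would yield a nonzero sum), so the PMF capacity is at least the local PMF capacity. For the reverse, I would show that from an $(n,m)$ PMF one can extract an $(n',m')$ local PMF with $(m')^{1/n'}$ as close as desired to $m^{1/n}$, by a tensoring-and-pruning argument in the spirit of~\cite[Proposition~6.3]{CohnKSU05}.

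Concretely, suppose $\{(x^{(1)}_i,\dots,x^{(k)}_i)\}_{i\in[m]}$ is an $(n,m)$ PMF for $k$-cycles over $\Fset_p$. Take the $t$-fold tensor power: index tuples by $I=(i_1,\dots,i_t)\in[m]^t$ and set $y^{(j)}_I=(x^{(j)}_{i_1},\dots,x^{(j)}_{i_t})\in\Fset_p^{nt}$, concatenating coordinates. This is again a PMF of dimension $nt$ and size $m^t$, since a zero sum in the concatenation forces a zero sum in each block, and the PMF property in each block forces the corresponding ``partial permutations'' to agree blockwise, hence the whole index tuple to be fixed; more carefully, the PMF condition says that the only way $k$ maps $[m]^t\to[m]^t$ can produce identically-zero sums is if they coincide. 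Now I restrict to a sub-collection where this global-permutation obstruction cannot occur: keep only the tuples $I$ whose \emph{type} (the vector of multiplicities of each symbol of $[m]$ among $i_1,\dots,i_t$) equals a fixed type $\tau$ that is as balanced as possible. By a standard counting/entropy argument, the number of tuples of the most popular type is at least $m^t / (t+1)^m = m^{t(1-o(1))}$ as $t\to\infty$ with $m$ fixed, so the size loss is negligible on the exponential scale.

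The point of fixing the type is that within a single type class, any $k$ index maps $\sigma_1,\dots,\sigma_k$ from the class to itself that are not all equal must disagree in some coordinate in a way that is witnessed by the PMF property: I would argue that if $y^{(1)}_{\sigma_1(I)}+\cdots+y^{(k)}_{\sigma_k(I)}=0$ for all $I$ in the class, then by looking at any single coordinate block and using that all tuples in the class realize the same multiset of symbols, one deduces that the induced maps on $[m]$ are bijections and, block by block, that they are forced to agree — contradicting $\sigma_1,\dots,\sigma_k$ not all equal. This upgrades the tensored-and-restricted family to a genuine local PMF of dimension $nt$ and size $m^{t(1-o(1))}$, giving local PMF capacity $\ge m^{1/n}(1-o(1))$ for every $(n,m)$ PMF, hence local PMF capacity $\ge$ PMF capacity. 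Combining with the trivial direction yields equality.

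The main obstacle I expect is the second bullet of the argument: verifying cleanly that fixing a balanced type really does kill the ``global permutation'' failure mode, i.e.\ that restricting to one type class converts the (weaker, permutation-quantified) PMF guarantee into the (stronger, tuple-quantified) local-PMF guarantee. The subtlety is that the PMF definition of~\cite{BX10} only controls $k$-tuples of \emph{permutations} of the full index set, whereas the local definition controls arbitrary $k$-tuples of indices; one must check that after tensoring and type-restriction, any violating tuple of indices would assemble into a violating tuple of permutations of the type class — which in turn must be extended to (or already is) a family of permutations to which the original PMF property applies. Handling this extension carefully, and confirming the $m^{t(1-o(1))}$ bound on the largest type class via $\binom{t}{t/m,\dots,t/m}\ge m^t/(t+1)^m$, is where the real work lies; everything else is bookkeeping about concatenating coordinates and invoking Fekete's lemma (Remark~\ref{remark:limit}) to pass from ``infinitely many $n$'' to a genuine limit.
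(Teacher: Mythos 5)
Your overall strategy (tensor the PMF, then restrict to a type class) is the right one, and its $u=1$ specialization is exactly the paper's proof. But there is a real gap in the general balanced-type argument that you correctly suspected, and I want to pin it down. The PMF definition of~\cite{BX10} quantifies only over $k$-tuples of \emph{permutations} of $[m]$. A tuple $I\in[m]^t$ of balanced type $(u,\dots,u)$ with $u>1$ is a $u$-to-$1$ map $[t]\to[m]$, not a bijection, so your claim that ``the induced maps on $[m]$ are bijections'' is simply false for $u>1$. What you would actually need is that any multiset of $t=um$ block-tuples in the zero-sum set $Z\subseteq[m]^k$ whose coordinate projections are all balanced must consist solely of diagonal tuples. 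Phrased as a $k$-partite hypergraph question, this asks whether a $u$-regular ``balanced cover'' of $Z$ must decompose into $u$ copies of the diagonal perfect matching. For $k=2$ this follows from Birkhoff--von Neumann / K{\"o}nig, but for $k\geq 3$ balanced tensors need not decompose into perfect matchings, so the reduction to the permutation-quantified PMF hypothesis does not go through without using the additional linear structure of $Z$ — and you don't supply that argument.

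The good news is that you never need $u>1$: the rate $\log(\text{size})/\dim \approx (\log_p m)/n$ is independent of $t$, so larger $t$ buys you nothing on the capacity scale. Simply take $t=m$ and keep only the tuples whose type is $(1,1,\dots,1)$, i.e.\ the permutations of $[m]$. Then for $k$ such tuples $I_1,\dots,I_k$ with every block sum zero, the $I_j$'s are literally $k$ permutations of $[m]$, and the PMF property applies verbatim to force $I_1=\cdots=I_k$. The size is $m!=m^{m(1-o(1))}=(c-o(1))^{nm}$ when $m=(c-o(1))^n$, so the capacity is preserved. This $u=1$ instance is precisely the construction in the paper's proof; the rest of your write-up (Fekete's lemma, the $m^{t(1-o(1))}$ count) is correct bookkeeping.
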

\begin{proof}
Clearly, the PMF capacity for $k$-cycles over $\Fset_p$ is at least as large as the local PMF capacity for $k$-cycles over $\Fset_p$. For the other direction, let $\calF$ be an $(n,m)$ PMF for $k$-cycles over $\Fset_p$ for $m=(c-o(1))^n$. For every permutation $\pi$ of $[m]$ consider the $k$-tuple of vectors of length $nm$, obtained by concatenating the $m$ $k$-tuples of vectors in $\calF$ ordered according to $\pi$. Let $\calG$ be the collection of all the $k$-tuples obtained this way. Observe that $\calG$ is an $(nm,m!)$ local PMF for $k$-cycles over $\Fset_p$ and that
$$m! = m^{(1-o(1))m} = (c-o(1))^{(1-o(1))nm} = (c-o(1))^{nm}.$$
Thus, the local PMF capacity for $k$-cycles over $\Fset_p$  is at least $c$, and we are done.
\end{proof}

The following lemma and corollary show how local PMFs imply lower bounds for testing $k$-cycle-freeness. Similar statements were shown in~\cite{BX10}, and we include here the proofs for completeness.
\begin{lemma}\label{lemma:LowerBound(m,n)}
Let $k \geq 3$ be an integer, and let $p$ be a prime. Suppose that there exists an $(n,m)$ local PMF for $k$-cycles over $\Fset_p$. Then, the query complexity of the canonical tester for $k$-cycle-freeness over $\Fset_p$ for distance $\eps$ on $n$ variable functions is $\Omega(1/\eps^{\alpha})$ for $\eps = m/p^n$ and $\alpha = \frac{k-1-(\log_p{m})/n}{1-(\log_p{m})/n}$.
\end{lemma}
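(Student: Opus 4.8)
The plan is to use the local PMF to build, for each $n$, a $k$-tuple of functions on $\Fset_p^n$ that has many pairwise disjoint $k$-cycles and is relatively far from $k$-cycle-freeness, and then to observe that the canonical tester must sample until it hits one of these cycles. Concretely, given the $(n,m)$ local PMF $\{(x_i^{(1)},\ldots,x_i^{(k)})\}_{i\in[m]}$, I would define $f_j:\Fset_p^n\to\{0,1\}$ to be the indicator of the set $\{x_i^{(j)} : i\in[m]\}$ for each $j\in[k]$ (after discarding repeated vectors within a coordinate if necessary, which only shrinks things by a constant and does not affect the asymptotics). The defining property of the local PMF says precisely that the only $k$-cycles of $(f_1,\ldots,f_k)$ are the $m$ "diagonal" ones $(x_i^{(1)},\ldots,x_i^{(k)})$, and these are pairwise disjoint in each coordinate $j$. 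Hence to destroy all $k$-cycles one must flip at least one value among each of the $m$ disjoint cycles, so the distance from $k$-cycle-freeness is at least $m/p^n =: \eps$ (the normalization matching the definition of distance in Section~\ref{sec:preli}, summing $\dist(f_j,g_j)$ over $j$).

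Next I would compute the query complexity of the canonical tester on this instance. The tester repeatedly picks a uniformly random zero-sum $k$-tuple $(y_1,\ldots,y_k)$ — equivalently $k-1$ uniform vectors and their negated sum — and checks whether $f_j(y_j)=1$ for all $j$. Each trial hits one of the $m$ good cycles with probability exactly $m/p^{n(k-1)}$ (there are $p^{n(k-1)}$ equally likely zero-sum $k$-tuples and exactly $m$ of them are cycles of our functions). To reject with constant probability the tester needs $q=\Theta(p^{n(k-1)}/m)$ trials. Now I rewrite this in terms of $\eps=m/p^n$: writing $\rho = (\log_p m)/n$, so $m = p^{\rho n}$ and $\eps = p^{(\rho-1)n}$, we get $q = \Theta(p^{n(k-1)}/p^{\rho n}) = \Theta(p^{n(k-1-\rho)})$. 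Since $1/\eps = p^{n(1-\rho)}$, we have $q = (1/\eps)^{(k-1-\rho)/(1-\rho)} = (1/\eps)^{\alpha}$ with $\alpha = \frac{k-1-(\log_p m)/n}{1-(\log_p m)/n}$, as claimed. One should check $\rho<1$ so the exponent is well-defined and positive; this holds because $m\le p^n$ trivially (indeed $m\le$ the size of each coordinate set $\le p^n$), and for the statement to be meaningful one takes $m=p^{\rho n}$ with $\rho$ bounded away from $1$.

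The only genuinely delicate point is the justification that the canonical tester's query complexity really is $\Omega(p^{n(k-1)}/m)$ and not smaller — i.e., that no cleverness lets the tester reject faster. This follows from the one-sided error / canonical nature of the tester: since the instance is not $k$-cycle-free it must reject with probability $\ge 2/3$, but the tester only ever rejects upon actually drawing a zero-sum $k$-tuple that is a $k$-cycle of $(f_1,\ldots,f_k)$; the probability that $t$ independent trials produce no such cycle is $(1-m/p^{n(k-1)})^t \ge 1 - tm/p^{n(k-1)}$, which exceeds $1/3$ unless $t = \Omega(p^{n(k-1)}/m)$. This is exactly the "query complexity is proportional to the inverse of the number of $k$-cycles" fact already invoked in the proof of Lemma~\ref{lemma:p_not_div_k}, and I expect the write-up to simply cite the analogous argument from~\cite{BX10}. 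The remaining work — the distance lower bound and the exponent arithmetic — is routine.
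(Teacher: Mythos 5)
Your proposal is correct and follows essentially the same route as the paper: construct the $k$-tuple $(f_1,\ldots,f_k)$ from the PMF's coordinate sets, use disjointness of the $m$ diagonal cycles to get the distance bound $\eps = m/p^n$, compute the per-trial success probability $m/p^{n(k-1)}$, and do the exponent arithmetic. One small remark: the caveat about ``discarding repeated vectors within a coordinate'' is unnecessary, since the local PMF condition already forces the $x_i^{(j)}$ (for fixed $j$) to be pairwise distinct — if $x_i^{(j)}=x_{i'}^{(j)}$ with $i\neq i'$, replacing $x_i^{(j)}$ by $x_{i'}^{(j)}$ in the $i$-th tuple still sums to zero, contradicting the PMF property.
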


\begin{proof}
Let $\calF = \{(x^{(1)}_i,x^{(2)}_i,\ldots,x^{(k)}_i)\}_{i \in [m]}$ be an $(n,m)$ local PMF for $k$-cycles over $\Fset_p$. For every $1 \leq j \leq k$, let $f_j:\Fset_p^n \rightarrow \{0,1\}$ be the characteristic function of the set $\{x^{(j)}_i\}_{i \in [m]}$. By definition of local PMFs, the number of $k$-cycles of the $k$-tuple of functions $(f_1,\ldots,f_k)$ is $m$, and these cycles are pairwise disjoint. Hence, in order to remove all the $m$ cycles, one has to change at least $m$ values of the functions, so this $k$-tuple is $\eps$-far from $k$-cycle-freeness for $\eps = \frac{m}{p^n}$. On the other hand, the probability that one iteration of the canonical tester, applied to $(f_1,\ldots,f_k)$, finds a $k$-cycle is $\frac{m}{p^{(k-1)n}}$, so its query complexity is $\Omega(q)$, for
\[q = \frac{p^{(k-1)n}}{m} = p^{(k-1)n - \log_p{m}} = (1/\eps)^{\frac{k-1 - (\log_p{m})/n}{1 - (\log_p{m})/n}}.\]
\end{proof}

\begin{corollary}\label{cor:PMF2LowerBound}
Let $k \geq 3$ and $p$ be fixed integers, such that $p$ is prime. If the local PMF capacity for $k$-cycles over $\Fset_p$ is $c$, then for every $d<c$, the query complexity of the canonical tester for $k$-cycle-freeness over $\Fset_p$ for distance $\eps$ is $\Omega(1/\eps^{\alpha})$ where $\alpha = \frac{k-1-\log_p{d}}{1-\log_p{d}}$. Furthermore, for every sufficiently small $\eps$ there exists an $n_0=n_0(\eps)$ such that for every $n \geq n_0$ the lower bound holds for $k$-tuples of $n$ variable functions that depend on all of their input variables. In particular, if the local PMF capacity for $k$-cycles over $\Fset_p$ is $p$, then the query complexity of the canonical tester for $k$-cycle-freeness over $\Fset_p$ for distance $\eps$ is super-polynomial in $1/\eps$.
\end{corollary}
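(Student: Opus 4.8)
The plan is to derive the bound from Lemma~\ref{lemma:LowerBound(m,n)} together with Remark~\ref{remark:limit} and one monotonicity observation, and then to obtain the strengthening (valid for every small $\eps$, every $n\ge n_0(\eps)$, and functions depending on all their variables) by a padding construction. Fix $d<c$ and choose $d'$ with $d<d'<c$. By Remark~\ref{remark:limit}, for every sufficiently large $n$ there is an $(n,m_n)$ local PMF for $k$-cycles over $\Fset_p$ with $m_n\ge(d')^n$, and $(\log_p m_n)/n\to\log_p c$. Applying Lemma~\ref{lemma:LowerBound(m,n)} to such a PMF gives, for $\eps_n=m_n/p^n\to0$, a lower bound $\Omega(1/\eps_n^{\alpha_n})$ with $\alpha_n=g\bigl((\log_p m_n)/n\bigr)$, where $g(t)=\frac{k-1-t}{1-t}$. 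Since $g'(t)=\frac{k-2}{(1-t)^2}>0$ on $[0,1)$ for $k\ge3$, the function $g$ is increasing there; because $(\log_p m_n)/n>\log_p d$ for all large $n$, we get $\alpha_n>\alpha:=g(\log_p d)=\frac{k-1-\log_p d}{1-\log_p d}$, and since $\eps_n<1$ this already gives the claimed bound $\Omega(1/\eps_n^{\alpha})$ along the sequence $\eps_n$.

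For the refinement I would, given a sufficiently small $\eps$, pick $n_1=n_1(\eps)$ of order $\log(1/\eps)/\log(p/d)$ and pass to a sub-collection of an $(n_1,m_{n_1})$ local PMF of some size $m\approx\eps\,p^{\,n_1+1}$ (using that every sub-collection of a local PMF is a local PMF of the same dimension, and also arranging $p\nmid m$ at an additive cost of $1$); the set of admissible $n_1$'s is a nonempty interval precisely because $d<d'$, and for this choice $(\log_p m)/n_1\ge\log_p d$. Let $\{(x^{(1)}_i,\dots,x^{(k)}_i)\}_{i\in[m]}\subseteq(\Fset_p^{n_1})^k$ be the chosen sub-collection and let $g_j$ be the indicator of its $j$-th coordinate set $\{x^{(j)}_i\}_{i\in[m]}$. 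For any $n\ge n_0(\eps):=n_1+1$, write $n=n_1+n_2$ with $n_2\ge1$ and define
\[
f_j(y,z)=g_j(y)\cdot\mathbf{1}\bigl[\,z_1+z_2+\cdots+z_{n_2}=0\,\bigr],\qquad y\in\Fset_p^{n_1},\ z\in\Fset_p^{n_2}.
\]
Restricting the $z$-part to the codimension-$1$ subspace $W=\{z:z_1+\cdots+z_{n_2}=0\}$ of size $|W|=p^{\,n_2-1}$ keeps the counting transparent: $(f_1,\dots,f_k)$ has exactly $m\cdot|W|^{k-1}$ $k$-cycles, and any family of value changes removing all of them has size $\ge m\,|W|$ (a single change lies in at most $|W|^{k-2}$ of the $|W|^{k-1}$ cycles of a fixed ``color'' $i$), so the tuple is $\eps$-far with distance $m/p^{\,n_1+1}$. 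Meanwhile one round of the canonical tester discovers a $k$-cycle with probability $m/p^{(k-1)(n_1+1)}$, which forces query complexity $\Omega\bigl(p^{(k-2)(n_1+1)}/\eps\bigr)=\Omega(1/\eps^{\alpha})$. Finally, each $f_j$ depends on all $n_2$ padding variables through the parity $z_1+\cdots+z_{n_2}$, and on all $n_1$ core variables because $p\nmid m$ prevents $\{x^{(j)}_i\}_i$ from being invariant under translation by any basis vector.

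The step I expect to be most delicate is the padding construction of the second paragraph rather than anything conceptual: the monotonicity of $g$ and Lemma~\ref{lemma:LowerBound(m,n)} do the real work, and what needs care is the bookkeeping — checking that the window of admissible dimensions $n_1(\eps)$ is nonempty (this is where $d<d'$ is used), that $m$ can be pinned to $\Theta(\eps\,p^{\,n_1+1})$ with $p\nmid m$, and that the padding simultaneously preserves the distance, the cycle count, and the dependence on every variable; the device making the cycle count scale with the domain size is precisely the codimension-$1$ restriction of the dummy block. The ``in particular'' clause is the case $c=p$: then $\alpha=g(\log_p d)$ can be made arbitrarily large by letting $d\to p^-$ (as $g(t)\to\infty$ when $t\to1^-$), so the query complexity of the canonical tester for $k$-cycle-freeness over $\Fset_p$ is super-polynomial in $1/\eps$.
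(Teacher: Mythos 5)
Your proposal is correct and follows essentially the same route as the paper: Remark~\ref{remark:limit} gives PMFs at every sufficiently large dimension, Lemma~\ref{lemma:LowerBound(m,n)} together with the monotonicity of $g(t)=\frac{k-1-t}{1-t}$ on $[0,1)$ gives the exponent $\alpha$, and the same codimension-one ``parity of the padding block'' construction lifts the bound to every $n\geq n_0(\eps)$. The one place you go beyond the paper is in passing to a sub-collection of size $m$ with $p\nmid m$; the paper uses the full PMF of size $\lceil d^{n_0}\rceil$ and asserts that the resulting functions depend on all input variables without further justification, whereas your $p\nmid m$ device actually supplies a clean reason for dependence on the core coordinates (any translation-invariant subset of $\Fset_p^{n_1}$ has size divisible by $p$), so your version is a small but genuine tightening of that step.
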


\begin{proof}
Let $c$ denote the local PMF capacity for $k$-cycles over $\Fset_p$, and take an arbitrary $d<c$. Using Remark~\ref{remark:limit}, for every sufficiently large $n$ there exists an $(n, \lceil d^n \rceil )$ local PMF for $k$-cycles over $\Fset_p$. Now, for a given sufficiently small $\eps$, let $n_0=n_0(\eps)$ be the largest integer satisfying $\eps \leq \frac{ \lceil d^{n_0} \rceil }{p^{n_0}}$. For this $n_0$ there exists an $(n_0,\lceil d^{n_0} \rceil)$ local PMF for $k$-cycle-freeness over $\Fset_p$. By Lemma~\ref{lemma:LowerBound(m,n)}, the corresponding $k$-tuple of functions $f_1,\ldots,f_k : \Fset_p^{n_0} \rightarrow \{0,1\}$ is $\eps$-far from $k$-cycle-freeness and requires $\Omega(1/\eps^{\alpha})$ queries of the canonical tester for $\alpha$ as in the statement of the corollary.

It remains to show that the above lower bound can be extended to $k$-tuples of functions with domain $\Fset_p^n$ for every $n \geq n_0$. For every $1 \leq j \leq k$ define the function $g_j : \Fset_p^n \rightarrow \{0,1\}$ such that $g_j(y)=1$ if and only if $y = (x,z)$ for $x \in \Fset_p^{n_0}$ and $z \in \Fset_p^{n-n_0}$ satisfying $f_j(x)=1$ and $\sum_{i=1}^{n-n_0}{z_i}=0$.\footnote{This is a slight generalization of a construction due to Jakob Nordstr{\"{o}}m (Private communication,~2010).} The $k$-tuple of functions $(g_1,\ldots,g_k)$ has at least $\eps \cdot p^{n_0} \cdot (p^{n-n_0-1})^{k-1}$ $k$-cycles, and every vector of these cycles belongs to $(p^{n-n_0-1})^{k-2}$ of the cycles. Therefore, $(g_1,\ldots,g_k)$ is $\eps'$-far from $k$-cycle-freeness for $\eps' = \eps/p = \Theta(\eps)$ and requires query complexity $\Omega(1/\eps^{\alpha})$, thus the required lower bound holds for every sufficiently small distance parameter. In addition, it is easy to verify that the $k$-tuple $(g_1,\ldots,g_k)$ depends on all of its input variables, as required.

Finally, observe that if the local PMF capacity for $k$-cycles over $\Fset_p$ is $p$, then for every $\alpha>0$, the query complexity of the canonical tester for $k$-cycle-freeness over $\Fset_p$ for some distance $\eps$ is $\Omega(1/\eps^{\alpha})$, thus it is super-polynomial in $1/\eps$.
\end{proof}

We turn to define (strong) uniquely solvable puzzles (USPs). Then we state a theorem of Alon et al.~\cite{AlonSU13} that says that strong USPs imply local PMFs for triangles over $\Fset_3$ (in their language, collections of ordered $3$-sunflowers in $\Z_3^n \times \Z_3^n \times \Z_3^n$ containing no {\em multicolored} sunflowers).

\begin{definition}\label{def:USP}
An $n$-dimensional {\em uniquely solvable puzzle} (USP) is a collection of vectors $\{x_i\}_{i \in [m]}$ in $\Z_3^n$ satisfying that for every three permutations $\pi_1,\pi_2,\pi_3$ of $[m]$, either $\pi_1 = \pi_2 = \pi_3$, or there exist $i \in [m]$ and $j \in [n]$ for which at least two of $(x_{\pi_1(i)})_j = 1$, $(x_{\pi_2(i)})_j = 2$, and $(x_{\pi_3(i)})_j = 3$ hold. A {\em strong USP} is defined similarly replacing the ``at least two'' by ``exactly two''.
The {\em (strong) USP capacity} is the largest constant $c$ for which there exist $n$-dimensional (strong) USPs of size $(c-o(1))^n$ for infinitely many values of $n$.
\end{definition}

\begin{theorem}[\cite{AlonSU13}]\label{thm:PMF3AlonSU}
If the strong USP capacity is at least $c$, then the local PMF capacity for triangles over $\Fset_3$ is at least $2^{2/3} \cdot c$.
\end{theorem}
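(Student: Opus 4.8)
The plan is to convert a large strong USP over $\Z_3$ into a large local PMF for triangles over $\Fset_3$ (equivalently, a ``multicolored sum-free'' collection over $\Fset_3$) and then read the capacity bound directly off the definition of local PMF capacity.

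\emph{Reductions.} Starting from a strong USP $\{x_i\}_{i\in[m]}$ in $\Z_3^n$ with $m=(c-o(1))^n$, I would first pass to the ``local'' form of the defining condition, quantified over index-triples rather than permutation-triples; this has the same capacity, by the same Fekete-plus-concatenation argument used above to identify the local PMF capacity with the PMF capacity. Then two easy observations: (i) applying any fixed permutation $\sigma$ of the symbol set $\{1,2,3\}$ to every coordinate of every $x_i$ again yields a strong USP — one checks this by relabelling the three colours according to $\sigma$; and (ii) by pigeonhole over the $\poly(n)$ symbol-frequency vectors, one may restrict to a sub-USP, still of size $(c-o(1))^n$, in which every $x_i$ has the same symbol-type $(n_1,n_2,n_3)$. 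Combining these, and passing to tensor powers, I may assume the common type is arbitrarily close to balanced, $n_1=n_2=n_3=(1-o(1))n/3$.

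\emph{The embedding.} Identify $\Z_3=\{1,2,3\}$ with $\Fset_3$. The key point is that $\{(x_i,\,x_i+\mathbf 1,\,x_i+2\cdot\mathbf 1)\}_{i\in[m]}$, with $\mathbf 1$ the all-ones vector, is a local PMF for triangles over $\Fset_3$: each triple sums to $3x_i+3\cdot\mathbf 1=0$, and $x_{i_1}+(x_{i_2}+\mathbf 1)+(x_{i_3}+2\cdot\mathbf 1)=0$ is equivalent to $x_{i_1}+x_{i_2}+x_{i_3}=0$ over $\Fset_3^n$; if $i_1,i_2,i_3$ are not all equal, the (local) strong-USP property supplies a coordinate $j$ at which exactly two of ``$(x_{i_1})_j=1$'', ``$(x_{i_2})_j=2$'', ``$(x_{i_3})_j=3$'' hold, and a direct inspection of the six such patterns shows that in each of them $(x_{i_1})_j+(x_{i_2})_j+(x_{i_3})_j\not\equiv 0\pmod 3$, so the triple does not cancel. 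This already gives local PMF capacity at least $c$, and it is precisely here that \emph{strongness} is indispensable: the remaining ``all three'' pattern $(1,2,3)$ does sum to $0\pmod 3$, so a plain USP would not do.

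\emph{Amplification to $2^{2/3}c$.} To gain the extra factor I would run a Coppersmith--Winograd-style ``value'' argument on the balanced-type strong USP: tensor-power it, apply a random permutation of the coordinates, and pass to the sub-collection supported on a Salem--Spencer progression-free set of coordinate positions, so that the multinomial freedom in placing the $n/3$ symbol-$2$ coordinates among the $2n/3$ non-$1$ coordinates is realised as genuinely new, non-cancelling triples without enlarging the dimension. On the balanced type this multinomial gain is $\binom{2n/3}{n/3}^{1-o(1)}=2^{(2/3-o(1))n}$, i.e.\ exactly $2^{2/3}=3/2^{H(1/3)}$ per dimension, yielding local PMFs for triangles over $\Fset_3$ of size $(2^{2/3}c-o(1))^N$ in dimension $N$ for infinitely many $N$, as required. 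The hard part is this last step: a naive one-$\Fset_3$-coordinate-per-$\Z_3$-coordinate gadget provably cannot beat capacity $c$ — any size-$2$ such gadget must have all three of its per-colour coordinate sets of size $2$, and then Cauchy--Davenport forces one of the six bad patterns to become cancellable — so the amplification has to exploit the rigidity of a single balanced type, and the delicate point is to arrange the coordinate restriction so that the additive $\Fset_3$-structure and the asymmetric ``exactly two'' condition of strong USPs are simultaneously respected.
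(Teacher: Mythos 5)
The paper does not prove this statement --- it imports it verbatim from Alon--Shpilka--Umans [AlonSU13] --- so there is no in-paper proof to compare against, and your attempt must be judged on its own terms.

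Your first two steps are correct and cleanly argued. The reductions (localizing the permutation-quantified strong-USP condition, closure under a global symbol permutation $\sigma$ together with the compensating relabeling $\pi'_a = \pi_{\sigma^{-1}(a)}$, and balancing the type by tensoring the six symbol-permuted copies) all go through. The diagonal embedding $\{(x_i,\,x_i+\mathbf 1,\,x_i+2\cdot\mathbf 1)\}$ is a valid local PMF over $\Fset_3$, and your case analysis showing that in every ``exactly two of $(x_{i_1})_j=1,\ (x_{i_2})_j=2,\ (x_{i_3})_j=3$'' pattern the coordinate sum is nonzero mod $3$ is right; it is also right that this is precisely where strongness (the exclusion of the $(1,2,3)$ pattern) is used. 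But this only establishes local PMF capacity at least $c$, not $2^{2/3}c$.

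The amplification step --- the entire content of the $2^{2/3}$ factor --- is a genuine gap, and you effectively acknowledge as much. What is needed is an explicit map that takes each balanced row $u$ of the strong USP and produces, for each of the $\binom{2n/3}{n/3}$ ways to redistribute the symbol-$2$ positions inside $u|_2\cup u|_3$, a new triple in $(\Fset_3^n)^3$, followed by a verification that the resulting collection of size $\approx m\cdot\binom{2n/3}{n/3}$ is still multicolored-sum-free (equivalently, a local PMF). That construction is never stated, and the verification --- which is the delicate part, since one must rule out cross-cancellations between different rows $u\neq u'$ \emph{and} between different redistributions $T\neq T'$ of the same $u$, using the ``exactly two'' property in a form your embedding does not set up --- is not attempted. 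The mechanism you gesture at (tensor power, random coordinate permutation, restriction to a Salem--Spencer set of coordinate positions) is also misdirected: Behrend/Salem--Spencer is the tool used in Coppersmith--Winograd and in Cohn--Kleinberg--Szegedy--Umans to \emph{construct} (strong) USPs of capacity $2^{2/3}$ from scratch, not to boost a \emph{given} strong USP of capacity $c$ to a local PMF of capacity $2^{2/3}c$; the latter is a deterministic per-row gadget argument, not a probabilistic hashing argument over a progression-free set. As written, the proposal proves a strictly weaker statement (capacity at least $c$) and leaves the theorem's actual content unproved.
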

\noindent
It is known that the strong USP capacity is at least $2^{2/3}$~\cite[Proposition~3.8]{CohnKSU05}. Hence, by Theorem~\ref{thm:PMF3AlonSU}, the local PMF capacity for triangles over $\Fset_3$ is at least $2^{4/3}$. By Corollary~\ref{cor:PMF2LowerBound}, it follows that the query complexity of the canonical tester for triangle-freeness over $\Fset_3$ for distance $\eps$ is at least $1/\eps^{7.298}$. Cohn, Kleinberg, Szegedy, and Umans conjectured that the strong USP capacity is $3/2^{2/3}$ and proved in~\cite{CohnKSU05} that their conjecture implies that the exponent of matrix multiplication is $2$. By Theorem~\ref{thm:PMF3AlonSU}, if their conjecture is true then the local PMF capacity for triangles over $\Fset_3$ is $3$, and the latter yields, by Corollary~\ref{cor:PMF2LowerBound}, a super-polynomial lower bound on the query complexity of the canonical tester for triangle-freeness over $\Fset_3$.

\section{Sunflower Conjectures vs. Local PMFs}\label{sec:sunfPMF}

In this section we prove that local PMFs for $k$-cycles over $\Fset_p$ can be constructed from certain collections of vectors in $\Z_D^n$, some of which are related to sunflower conjectures of Alon et al.~\cite{AlonSU13}. The idea behind the construction is quite simple: every vector of these collections is mapped to a $k$-tuple of vectors, in a way that every symbol of $\Z_D$ is replaced by certain $k$ vectors over $\Fset_p$.

We need the following lemma and the corollary that follows it. We use here the notation $A^{(\ell)}$ to denote the $\ell$th column of a matrix $A$.
\begin{lemma}\label{lemma:constructionA}
For every prime $p$ and a positive integer $k$, there exists a collection of $p^k$ matrices $A_1,A_2,\ldots,A_{p^k}$ in $\Fset_p^{k \times k}$ such that for every $1 \leq i \neq j \leq p^k$ and every non-empty set $I \subseteq [k]$ it holds that $$\sum_{\ell \in I}{A_i^{(\ell)}} \neq \sum_{\ell \in I}{A_j^{(\ell)}}.$$
\end{lemma}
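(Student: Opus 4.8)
The plan is to parametrize the $p^k$ matrices linearly by the vectors of $\Fset_p^k$. Fix any bijection between $[p^k]$ and $\Fset_p^k$, and for $v \in \Fset_p^k$ let $A_v \in \Fset_p^{k \times k}$ be the matrix whose $\ell$th column is $A_v^{(\ell)} = M_\ell\, v$, where $M_1, \ldots, M_k \in \Fset_p^{k \times k}$ are fixed matrices still to be chosen. Then for $v \neq v'$ and every non-empty $I \subseteq [k]$,
$$\sum_{\ell \in I} A_v^{(\ell)} \;-\; \sum_{\ell \in I} A_{v'}^{(\ell)} \;=\; \Big( \sum_{\ell \in I} M_\ell \Big) (v - v'),$$
so it suffices to choose $M_1, \ldots, M_k$ so that $M_I := \sum_{\ell \in I} M_\ell$ is invertible for every non-empty $I \subseteq [k]$; the displayed difference is then nonzero because $v - v' \neq 0$.

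Next I would produce such matrices via a companion-matrix (field extension) construction. Let $q(x) \in \Fset_p[x]$ be a monic irreducible polynomial of degree $k$ — one exists for every prime $p$ and every $k \geq 1$ — let $M$ be its companion matrix, and set $M_\ell := M^{\ell-1}$ for $1 \leq \ell \leq k$. For a non-empty $I \subseteq [k]$, write $g_I(x) := \sum_{\ell \in I} x^{\ell-1}$, a nonzero polynomial of degree at most $k-1$, so that $M_I = g_I(M)$. Since $q$ is the minimal polynomial of its companion matrix and $q$ is irreducible of degree $k > \deg g_I$, we have $\gcd(g_I, q) = 1$; hence there are $a, b \in \Fset_p[x]$ with $a g_I + b q = 1$, and evaluating at $M$ gives $a(M)\, g_I(M) = I_k$, so $M_I = g_I(M)$ is invertible. (Alternatively one could establish existence of suitable $M_\ell$ by a union bound over the $2^k - 1 < p^k$ bad events among $(M_1,\dots,M_k) \in (\Fset_{p^k})^k$ realized as multiplications in $\Fset_{p^k}$, but the explicit choice above is cleaner.)

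Finally I would assemble the pieces: with the $M_\ell$ above, $A_v$ is the Krylov-type matrix with columns $v, Mv, \ldots, M^{k-1}v$; relabelling $\{A_v : v \in \Fset_p^k\}$ as $A_1, \ldots, A_{p^k}$ and combining the displayed identity with the invertibility of each $M_I$ yields the claimed column-sum separation. The only genuine decision in the proof is the reduction in the first paragraph — to the statement that all $2^k - 1$ non-empty subset sums of $M_1, \ldots, M_k$ are invertible matrices — and once that framing is in place the companion-matrix construction makes the rest automatic; I do not anticipate any substantive obstacle beyond identifying this parametrization.
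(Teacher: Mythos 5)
Your proof is correct and is essentially the paper's own construction in different clothing: the paper takes the columns of $A_i$ to be the $\Fset_p^k$-encodings of $\alpha_i, \alpha_i\beta, \ldots, \alpha_i\beta^{k-1}$ for a generator $\beta$ of $\Fset_{p^k}^*$, which is exactly your Krylov matrix $(v, Mv, \ldots, M^{k-1}v)$ once $M$ is recognized as multiplication by $\beta$. Your invertibility argument for $g_I(M)$ via $\gcd(g_I,q)=1$ is the same fact the paper expresses as the linear independence of $1,\beta,\ldots,\beta^{k-1}$ over $\Fset_p$ forcing $\sum_{\ell\in I}\beta^{\ell-1}\neq 0$.
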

\begin{proof}
Denote $q = p^k$, and let $\alpha_1,\ldots,\alpha_q$ be the $q$ elements of the field $\Fset_q$. Let $enc:\Fset_q \rightarrow \Fset_p^k$ be the natural encoding of the elements of $\Fset_q$ as distinct vectors in $\Fset_p^k$. This encoding is linear, that is, $enc(0)=(0,\ldots,0)$ and $enc(x+y)=enc(x)+enc(y)$ for every $x,y \in \Fset_q$. Let $\beta$ be a generator of the multiplicative group $\Fset_q^*$, and notice that the set $\{1,\beta,\beta^2,\ldots,\beta^{k-1}\}$ is linearly independent over $\Fset_p$.

Now, for every $1 \leq i \leq q$, define the $k$ by $k$ matrix $A_i$ as the matrix whose columns are
$$enc(\alpha_i),enc(\alpha_i \cdot \beta),enc(\alpha_i \cdot \beta^2),\ldots,enc(\alpha_i \cdot \beta^{k-1}).$$
To prove that the collection $A_1,A_2,\ldots,A_q$ satisfies the requirement, take $1 \leq i \neq j \leq q$ and a non-empty set $I \subseteq [k]$. Assume, for the sake of contradiction, that the matrices $A_i$ and $A_j$ have the same sum of columns that correspond to indices in $I$. Viewing these sums as elements of $\Fset_q$, it follows that
$$\alpha_i \cdot \sum_{\ell \in I}{\beta^{\ell-1}} = \alpha_j \cdot \sum_{\ell \in I}{\beta^{\ell-1}}.$$
By linear independence, it follows that $\sum_{\ell \in I}{\beta^{\ell-1}}$ is nonzero, thus $\alpha_i = \alpha_j$, a contradiction.
\end{proof}

\begin{corollary}\label{cor:constructionB}
For every prime $p$ and a positive integer $k$, there exists a collection of $p^k$ matrices $B_1,B_2,\ldots,B_{p^k}$ in $\Fset_p^{k \times (k+1)}$ such that
\begin{enumerate}
  \item\label{item:sum_zero} for every $1 \leq i \leq p^k$, the sum of the columns of $B_i$ is the zero vector, and
  \item\label{item:subset_sum} for every $1 \leq i \neq j \leq p^k$ and every non-empty set $I \subset [k+1]$, the sum $$\sum_{\ell \in I}{B_i^{(\ell)}} + \sum_{\ell \in [k+1] \setminus I}{B_j^{(\ell)}}$$ is nonzero.
\end{enumerate}
\end{corollary}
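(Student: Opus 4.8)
The plan is to obtain the matrices $B_i$ from the matrices $A_i$ of Lemma~\ref{lemma:constructionA} by appending one extra column that forces the total column sum to vanish. Concretely, given the collection $A_1,\ldots,A_{p^k} \in \Fset_p^{k \times k}$ guaranteed by the lemma, I would define $B_i \in \Fset_p^{k \times (k+1)}$ by taking the first $k$ columns of $B_i$ to be the columns of $A_i$, and setting the $(k+1)$st column to be $B_i^{(k+1)} = -\sum_{\ell=1}^{k} A_i^{(\ell)}$, i.e., minus the sum of all columns of $A_i$. Property~\ref{item:sum_zero} is then immediate: the sum of all $k+1$ columns of $B_i$ is $\sum_{\ell=1}^{k} A_i^{(\ell)} + B_i^{(k+1)} = 0$.

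For property~\ref{item:subset_sum}, fix $i \neq j$ and a non-empty proper subset $I \subset [k+1]$, and suppose toward a contradiction that $\sum_{\ell \in I} B_i^{(\ell)} + \sum_{\ell \in [k+1]\setminus I} B_j^{(\ell)} = 0$. Using $\sum_{\ell \in [k+1]} B_j^{(\ell)} = 0$ from property~\ref{item:sum_zero}, I can rewrite $\sum_{\ell \in [k+1]\setminus I} B_j^{(\ell)} = -\sum_{\ell \in I} B_j^{(\ell)}$, so the hypothesis becomes $\sum_{\ell \in I} B_i^{(\ell)} = \sum_{\ell \in I} B_j^{(\ell)}$. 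Now I split into cases according to whether the special column $k+1$ lies in $I$. If $k+1 \notin I$, then $I \subseteq [k]$ is a non-empty set of ordinary-column indices, and the equality $\sum_{\ell \in I} A_i^{(\ell)} = \sum_{\ell \in I} A_j^{(\ell)}$ directly contradicts Lemma~\ref{lemma:constructionA}. If $k+1 \in I$, I use $B_i^{(k+1)} = -\sum_{\ell=1}^{k} A_i^{(\ell)}$ to substitute: writing $I = I' \cup \{k+1\}$ with $I' \subseteq [k]$, the equality becomes $\sum_{\ell \in I'} A_i^{(\ell)} - \sum_{\ell=1}^{k} A_i^{(\ell)} = \sum_{\ell \in I'} A_j^{(\ell)} - \sum_{\ell=1}^{k} A_j^{(\ell)}$, i.e., $\sum_{\ell \in [k]\setminus I'} A_i^{(\ell)} = \sum_{\ell \in [k]\setminus I'} A_j^{(\ell)}$. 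Since $I \neq [k+1]$ we have $I' \neq [k]$, so $[k]\setminus I'$ is a non-empty subset of $[k]$, and again Lemma~\ref{lemma:constructionA} is contradicted. This covers all cases.

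I do not anticipate a serious obstacle here; the corollary is essentially a bookkeeping wrapper around the lemma. The one point requiring mild care is the case analysis in property~\ref{item:subset_sum}: one must remember that $I$ is required to be a \emph{proper} non-empty subset of $[k+1]$, which is exactly what guarantees that in both branches the relevant index set ($I$ itself, or $[k]\setminus I'$) is a non-empty subset of $[k]$ on which Lemma~\ref{lemma:constructionA} applies. If $I$ were allowed to be all of $[k+1]$, the sum would be $\sum_\ell B_i^{(\ell)} = 0$ by property~\ref{item:sum_zero} and the claim would fail, so the properness hypothesis is genuinely used.
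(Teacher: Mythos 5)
Your proof is correct and uses the same construction as the paper (append a column that zeroes out the row sums of $A_i$). The only cosmetic difference is that the paper dispatches your second case by a without-loss-of-generality reduction (swapping $i$ with $j$ and $I$ with its complement, exploiting the symmetry of the claim), whereas you carry out the case $k+1 \in I$ explicitly; the underlying argument is identical.
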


\begin{proof}
By Lemma~\ref{lemma:constructionA}, there exists a collection of $p^k$ matrices $A_1,A_2,\ldots,A_{p^k}$ in $\Fset_p^{k \times k}$ such that for every $1 \leq i \neq j \leq p^k$ and every non-empty set $I \subseteq [k]$ it holds that $$\sum_{\ell \in I}{A_i^{(\ell)}} \neq \sum_{\ell \in I}{A_j^{(\ell)}}.$$
For every $1 \leq i \leq p^k$ define the $k$ by $k+1$ matrix $B_i$ as the matrix whose columns are
$$A_i^{(1)},A_i^{(2)},\ldots,A_i^{(k)},-\sum_{\ell=1}^{k}{A_i^{(\ell)}}.$$
The collection $B_1,B_2,\ldots,B_{p^k}$ trivially satisfies Item~\ref{item:sum_zero}. To prove that Item~\ref{item:subset_sum} is also satisfied, take $1 \leq i \neq j \leq p^k$ and a non-empty set $I \subset [k+1]$, and assume by contradiction that the sum of the columns of $B_i$ corresponding to indices in $I$ and the columns of $B_j$ corresponding to indices in $[k+1] \setminus I$ is the zero vector. Assume without loss of generality that $k+1 \notin I$, and use Item~\ref{item:sum_zero} to observe that
$$\sum_{\ell \in I}{A_i^{(\ell)}} = \sum_{\ell \in I}{B_i^{(\ell)}} = -\sum_{\ell \in [k+1] \setminus I}{B_j^{(\ell)}} = \sum_{\ell \in I}{B_j^{(\ell)}} = \sum_{\ell \in I}{A_j^{(\ell)}},$$
in contradiction to the property of the collection $A_1,A_2,\ldots,A_{p^k}$.
\end{proof}

Now, equipped with Corollary~\ref{cor:constructionB}, we are ready to show how certain collections of vectors in $\Z_D^n$ can be transformed to local PMFs over $\Fset_p$. An important property of the transformation is that it preserves optimal capacity, namely, vector collections of optimal size $(D-o(1))^n$ are transformed to local PMFs of optimal capacity $p$.
\begin{theorem}\label{thm:ConsPMF}
Let $k \geq 3$ be an integer, and let $p$ be a prime. Assume that for infinitely many values of $n$ there exists a collection $\calF$ of $(c-o(1))^n$ vectors in $\Z_{p^{k-1}}^n$ such that for every $k$ vectors $v_1,\ldots,v_k \in \calF$, not all equal, there exists an $i \in [n]$ for which
$$|\{(v_1)_i,\ldots,(v_k)_i\}| = 2.$$
Then, the local PMF capacity for $k$-cycles over $\Fset_p$ is at least $c^{1/(k-1)}$.
\end{theorem}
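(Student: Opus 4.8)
The plan is to build the required local PMF directly, by substituting for each symbol of $\Z_{p^{k-1}}$ a block of $k-1$ coordinates over $\Fset_p$, with the substitution pattern for a given symbol dictated by the corresponding matrix from Corollary~\ref{cor:constructionB}. First I would invoke Corollary~\ref{cor:constructionB} with $k$ replaced by $k-1$, obtaining matrices $B_1,\ldots,B_{p^{k-1}} \in \Fset_p^{(k-1)\times k}$ such that the $k$ columns of each $B_s$ sum to the zero vector (Item~\ref{item:sum_zero}), and for all $s \neq t$ and every non-empty proper subset $I \subsetneq [k]$ the sum $\sum_{\ell\in I} B_s^{(\ell)} + \sum_{\ell\in[k]\setminus I} B_t^{(\ell)}$ is nonzero (Item~\ref{item:subset_sum}). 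For a vector $v = (v_1,\ldots,v_n)\in\calF$ and an index $\ell\in[k]$, I would let $x^{(\ell)}_v\in\Fset_p^{n(k-1)}$ be the concatenation over $i\in[n]$ of the columns $B_{v_i}^{(\ell)}$. The candidate family, living in dimension $n(k-1)$, is then $\{(x^{(1)}_v,\ldots,x^{(k)}_v)\}_{v\in\calF}$, indexed by the $(c-o(1))^n$ elements of $\calF$.

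Next I would verify the two defining conditions of a local PMF for $k$-cycles over $\Fset_p$. The zero-sum condition for each tuple follows block by block from Item~\ref{item:sum_zero}. For matching-freeness, suppose $v^1,\ldots,v^k\in\calF$ satisfy $\sum_{\ell=1}^{k} x^{(\ell)}_{v^\ell}=0$; restricting to the $i$-th block, this says $\sum_{\ell=1}^{k} B_{v^\ell_i}^{(\ell)}=0$ for every $i\in[n]$. If the $v^\ell$ were not all equal, the hypothesis on $\calF$ supplies a coordinate $i$ at which exactly two distinct symbols $a\neq b$ occur among $v^1_i,\ldots,v^k_i$; with $I:=\{\ell\in[k]: v^\ell_i=a\}$, which is a non-empty proper subset of $[k]$ since both symbol classes are non-empty, the block-$i$ identity becomes $\sum_{\ell\in I} B_a^{(\ell)}+\sum_{\ell\in[k]\setminus I} B_b^{(\ell)}=0$, contradicting Item~\ref{item:subset_sum}. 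Hence $v^1=\cdots=v^k$, as required.

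Finally I would carry out the capacity bookkeeping. The construction converts a collection of $(c-o(1))^n$ vectors in $\Z_{p^{k-1}}^n$ into an $(n(k-1),(c-o(1))^n)$ local PMF for $k$-cycles over $\Fset_p$; writing $N=n(k-1)$ and using $(c-o(1))^n=\big((c-o(1))^{1/(k-1)}\big)^{N}=\big(c^{1/(k-1)}-o(1)\big)^{N}$, this yields an $(N,(c^{1/(k-1)}-o(1))^N)$ local PMF for infinitely many values of $N$, so the local PMF capacity for $k$-cycles over $\Fset_p$ is at least $c^{1/(k-1)}$.

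I do not anticipate a genuine obstacle: Corollary~\ref{cor:constructionB} was designed so that its Item~\ref{item:subset_sum}, phrased in terms of a split of the column index set into a set $I$ and its complement, matches precisely what a ``coordinate with exactly two distinct symbols'' in $\calF$ produces — the two symbol classes furnish the split. The only points requiring a little care are checking that the $I$ that arises is a genuine non-empty \emph{proper} subset of $[k]$, that $k-1\ge 1$ so the corollary applies (true since $k\ge 3$), and the elementary estimate $(c-o(1))^{1/(k-1)}=c^{1/(k-1)}-o(1)$ behind the capacity computation.
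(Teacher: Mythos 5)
Your proposal is correct and follows essentially the same route as the paper: invoke Corollary~\ref{cor:constructionB} with parameter $k-1$, map each symbol of $\Z_{p^{k-1}}$ to the corresponding matrix $B_s$ (equivalently, to its $k$ columns), concatenate blockwise, and use the ``exactly two distinct symbols'' hypothesis to produce the split $I \subsetneq [k]$ that Item~\ref{item:subset_sum} forbids. Your phrasing in terms of the subset $I=\{\ell: v^\ell_i = a\}$ and your explicit capacity computation $(c-o(1))^{1/(k-1)}=c^{1/(k-1)}-o(1)$ are just slightly more spelled out than the paper's, but the argument is the same.
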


\begin{proof}
Let $\calF \subseteq \Z_{p^{k-1}}^n$ be a collection of $(c-o(1))^n$ vectors satisfying the condition given in the theorem. By Corollary~\ref{cor:constructionB}, applied with $k-1$, there exists a collection of $p^{k-1}$ matrices $B_1,B_2,\ldots,B_{p^{k-1}}$ in $\Fset_p^{(k-1) \times k}$ satisfying that
(1) for every $1 \leq i \leq p^{k-1}$, the sum of the columns of $B_i$ is the zero vector, and
(2) for every $1 \leq i \neq j \leq p^{k-1}$ and every non-empty set $I \subset [k]$, the sum $\sum_{\ell \in I}{B_i^{(\ell)}} + \sum_{\ell \in [k] \setminus I}{B_j^{(\ell)}}$ is nonzero.

Consider the function $f: \Z_{p^{k-1}}^n \rightarrow \Fset_p^{n(k-1) \times k}$ that maps every vector $v \in \Z_{p^{k-1}}^n$ to the concatenation of the $n$ matrices $B_{v_1},B_{v_2},\ldots,B_{v_n}$. We claim that the set $\calG = \{f(v) \mid v \in \calF\}$ is an $(n(k-1),(c-o(1))^n)$ local PMF for $k$-cycles over $\Fset_p$, thus the PMF capacity for $k$-cycles over $\Fset_p$ is at least $c^{1/(k-1)}$.

To see this, first observe that property (1) of the matrices $B_1,B_2,\ldots,B_{p^{k-1}}$ implies that the sum of the $k$ columns of every $f(v)$ is the zero vector. Second, let $f(v_1),\ldots,f(v_k)$ be $k$ elements, not all equal, of $\calG$. Our goal is to prove that the sum of the vectors $f(v_1)^{(1)},\ldots,f(v_k)^{(k)}$ is nonzero. Since $f$ is injective, the vectors $v_1,\ldots,v_k$ are not all equal, so there is an $i \in [n]$ for which
$$|\{(v_1)_i,\ldots,(v_k)_i\}|=2.$$
Hence, the $i$th blocks (of length $k-1$) of the matrices $f(v_1),\ldots,f(v_k)$ contain exactly two distinct matrices $B_j$ and $B_{j'}$. By property (2), the sum of the $i$th blocks of the vectors $f(v_1)^{(1)},\ldots,f(v_k)^{(k)}$ is nonzero, hence the sum of these vectors is also nonzero, and we are done.
\end{proof}

Theorem~\ref{thm:ConsPMF} gives us a method to derive lower bounds on the query complexity of the canonical tester for $k$-cycle-freeness over $\Fset_p$ from certain collections of vectors in $\Z^n_{p^{k-1}}$.
%These collections should satisfy that for every $k$ vectors, not all equal, there exists a coordinate in which they have exactly two distinct symbols.
In the special case of $k=3$, the vectors are in $\Z_{p^2}^n$, and for every three vectors, not all equal, there is a coordinate in which the three symbols are not all equal and are not all distinct. This exactly means that the three vectors do not form a $3$-sunflower (see Section~\ref{sec:sunflowerIntro}), yielding the following result.

\begin{theorem}\label{thm:Mainp^2}
Let $p$ be a prime, and assume that for infinitely many values of $n$ there exists a collection of $(c-o(1))^n$ vectors in $\Z_{p^{2}}^n$ containing no $3$-sunflowers. Then, for every $d<\sqrt{c}$, the query complexity of the canonical tester for triangle-freeness over $\Fset_p$ for distance $\eps$ is $\Omega(1/\eps^{\alpha})$ where $\alpha = \frac{2-\log_p{d}}{1-\log_p{d}}$.
\end{theorem}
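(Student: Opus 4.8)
The plan is to derive Theorem~\ref{thm:Mainp^2} as a direct consequence of Theorem~\ref{thm:ConsPMF} together with Corollary~\ref{cor:PMF2LowerBound}, specialized to the case $k=3$. First I would observe that for $k=3$ the hypothesis of Theorem~\ref{thm:ConsPMF} requires a collection $\calF$ of $(c-o(1))^n$ vectors in $\Z_{p^{k-1}}^n = \Z_{p^2}^n$ such that for any three vectors $v_1,v_2,v_3 \in \calF$, not all equal, there is a coordinate $i \in [n]$ with $|\{(v_1)_i,(v_2)_i,(v_3)_i\}| = 2$. The key translation step is to note that three symbols in $\Z_{p^2}$ satisfy $|\{(v_1)_i,(v_2)_i,(v_3)_i\}| = 2$ precisely when they are neither all equal nor all distinct; hence the condition ``for every three vectors, not all equal, some coordinate has exactly two distinct symbols'' is equivalent to saying that no three distinct vectors of $\calF$ form a $3$-sunflower (and three vectors that are not all distinct automatically fail to be a $3$-sunflower unless they are all equal, so the ``not all equal'' qualifier is exactly the right one). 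So a collection of $(c-o(1))^n$ vectors in $\Z_{p^2}^n$ containing no $3$-sunflowers is exactly a collection satisfying the hypothesis of Theorem~\ref{thm:ConsPMF} with $k=3$.

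Next I would apply Theorem~\ref{thm:ConsPMF} to conclude that the local PMF capacity for triangles (i.e., $3$-cycles) over $\Fset_p$ is at least $c^{1/(k-1)} = c^{1/2} = \sqrt{c}$. Then, for any $d < \sqrt{c}$, I would invoke Corollary~\ref{cor:PMF2LowerBound} with $k=3$: since the local PMF capacity is at least $\sqrt{c} > d$, the corollary gives that the query complexity of the canonical tester for $3$-cycle-freeness (triangle-freeness) over $\Fset_p$ for distance $\eps$ is $\Omega(1/\eps^{\alpha})$ with $\alpha = \frac{k-1-\log_p d}{1-\log_p d} = \frac{2 - \log_p d}{1 - \log_p d}$, which is exactly the exponent claimed. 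One should also note that the terminology ``triangle-freeness'' over $\Fset_p$ is synonymous with ``$3$-cycle-freeness'' over $\Fset_p$ as defined in Section~\ref{sec:preli}, so no further work is needed to bridge the two statements.

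There is essentially no new mathematical content here beyond assembling the two previously established results; the only genuine point requiring care is the combinatorial equivalence between the no-$3$-sunflower condition and the ``exactly two distinct symbols in some coordinate'' condition of Theorem~\ref{thm:ConsPMF}, and in particular making sure the ``not all equal'' hypothesis is handled correctly — a triple of vectors that are all equal is trivially a $3$-sunflower, so the statement about ``no $3$-sunflowers'' must be read as: no three vectors of $\calF$, other than a trivial all-equal triple, form a sunflower, which matches the convention used implicitly in Section~\ref{sec:sunflowerIntro}. I expect this bookkeeping about the trivial case to be the main (minor) obstacle; everything else is a mechanical substitution of $k=3$ and $D = p^2$ into the general machinery.
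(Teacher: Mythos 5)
Your proposal is correct and takes exactly the same route as the paper: apply Theorem~\ref{thm:ConsPMF} with $k=3$ (using the observation that the ``exactly two distinct symbols in some coordinate'' condition is equivalent to the absence of $3$-sunflowers, with the same ``not all equal'' caveat the paper notes in the text preceding the theorem) to get local PMF capacity at least $\sqrt{c}$, and then invoke Corollary~\ref{cor:PMF2LowerBound}.
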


\begin{proof}
By Theorem~\ref{thm:ConsPMF}, the assumption implies that the local PMF capacity for triangles over $\Fset_p$ is at least $\sqrt{c}$. Corollary~\ref{cor:PMF2LowerBound} completes the proof.
\end{proof}

\noindent
Observe that if the Weak Sunflower Conjecture over $\Z_D$ (Conjecture~\ref{conj:weakD}) is false for $D=p^2$, it follows from the above theorem that the query complexity of the canonical tester for triangle-freeness over $\Fset_p$ for distance $\eps$ is super-polynomial in $1/\eps$, confirming Theorem~\ref{thm:FpIntro}.

For the special case of $p=3$, it is easy to get a local PMF for triangles from a collection of vectors in $\Z_3^n$ containing no $3$-sunflowers. Indeed, for such a collection $\calF$, the set $\{(x,x,x)\}_{x \in \calF}$ forms a local PMF for triangles over $\Fset_3$ of the same size. Thus, in case that the Weak Sunflower Conjecture over $\Z_3$ is false, a super-polynomial lower bound on the query complexity of the canonical tester follows. Note that this assumption can be only weaker than the assumption that the Weak Sunflower Conjecture over $\Z_9$ is false. The reason is that given a collection of $(9-o(1))^n$ vectors in $\Z_9^n$ containing no $3$-sunflowers one can replace every symbol of $\Z_9$ in the vectors by its base-$3$ representation to obtain a collection of $(3-o(1))^{2n}$ vectors in $\Z_3^{2n}$ containing no $3$-sunflowers.

We note that for $k \geq 4$ the property required in Theorem~\ref{thm:ConsPMF} from the collection $\calF \subseteq \Z_D^n$ does not coincide with freeness of $k$-sunflowers. Indeed, the collection should satisfy that for every $k$ vectors $v_1,\ldots,v_k \in \calF$, not all equal, there exists a coordinate in which they contain exactly $2$ distinct symbols. On the other hand, freeness of $k$-sunflowers means that for every such $k$ vectors, there exists a coordinate in which the number of distinct symbols is in the range from $2$ to $k-1$.

It is natural to ask if one can relate local PMFs for $k$-cycles to collections of vectors with no $k$-sunflowers for $k \geq 4$. It seems, though, that the proof technique of Theorem~\ref{thm:ConsPMF} cannot achieve this in a way that preserves optimal capacity. To see this, observe that such an extension requires a mapping from $\Z_D$ to $D$ matrices in $\Fset_p^{\ell \times k}$ for $D=p^\ell$ (because the transformation increases the length of the vectors by a factor of $\ell$, and capacity $D$ should be mapped to capacity $p$). The matrices returned by this mapping have to satisfy the following two properties: (1) the sum of the columns of each of the matrices should be zero, and (2) for every $k$ of these matrices $B_1,\ldots,B_k$, not all equal and not all distinct, the sum of the vectors $B_1^{(1)},\ldots,B_k^{(k)}$ should be nonzero. However, it is not difficult to show that such a collection of matrices does not exist for $k \geq 4$. First observe that the $p^\ell$ matrices should contain all the $p^\ell$ distinct vectors of $\Fset_p^\ell$ in each of their $k$ columns, since otherwise two of the matrices contradict property (2). Now, take two arbitrary distinct matrices $B_i$ and $B_j$, and consider the sum, say, of the first $k-2$ columns of $B_i$ and the $(k-1)$th column of $B_j$. The unique vector that completes this sum to zero is the $k$th column of one of the $p^\ell$ matrices, so we again contradict property (2).

\section{A Lower Bound on Testing $k$-Cycle-Freeness}\label{sec:Genk}

As mentioned before, the best known lower bound on the query complexity of the canonical tester for testing triangle-freeness over $\Fset_2$ for distance $\eps$ is $1/\eps^{13.239}$, as was shown by Fu and Kleinberg~\cite{FuK13}. Their proof is crucially based on a construction of uniquely solvable puzzles of Coppersmith and Winogard~\cite{CoppersmithW90}, which employs a construction of Behrend~\cite{Behrend46} of dense sets of integers with no $3$-term arithmetic progressions. In this section we generalize the lower bound of~\cite{FuK13} to testing $k$-cycle-freeness over $\Fset_p$ for every $k \geq 3$ and a prime $p$.

We need here a few notations. For a vector $v\in \Z_k^n$ we denote by $v|_{j}=\{i\in [n] \mid v_i = j\}$ the set of coordinates at which $v$ has symbol $j\in \Z_k$. Note that for every vector $v\in \Z_k^n$, the sets $v|_{1}, \ldots, v|_{k}$ form a partition of $[n]$. In case that the sets $v|_{1}, \ldots, v|_{k}$ have the same size we say that the vector $v$ is {\em balanced}. Finally, let $H$ stand for the binary entropy function, defined by $H(p) = -p\log_2{p}-(1-p)\log_2{(1-p)}$ for $0\leq p \leq 1$.

Let us start with a generalization of the construction of~\cite{CoppersmithW90}, stated below. Note that the case of $k=3$ gives the construction of~\cite{CoppersmithW90}, which implies a uniquely solvable puzzle as was defined for the purpose of fast matrix multiplication (see Definition~\ref{def:USP}).

\begin{theorem}\label{thm:GenkCW}
For every fixed integer $k \geq 3$ and a sufficiently large $n$, there exists a collection $\calF$ of $(2^{H(1/k)} - o(1))^{nk}$ balanced vectors in $\Z_k^{nk}$ such that for every $k$ vectors $v_1,\ldots,v_k \in \calF$, the sets $v_1|_1,\ldots,v_k|_k$ form a partition of $[n \cdot k]$ if and only if $v_1 = \cdots = v_k$.
\end{theorem}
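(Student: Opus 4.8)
The natural strategy is to generalize the Coppersmith–Winograd construction for $k=3$ by embedding a Behrend-type set of integers avoiding $k$-term arithmetic progressions into the structure of balanced vectors. First I would fix $n$ and, for a parameter $N$ to be chosen as roughly $k^{-n}\binom{nk}{n,\ldots,n} = (2^{H(1/k)} - o(1))^{nk}$, pick a dense subset $S \subseteq \{0,1,\ldots,M-1\}$ of size $|S| = N/M^{o(1)}$ with no nontrivial solution to $a_1 + \cdots + a_k = k \cdot a$ (equivalently no $k$-term AP of a suitable form); such a set exists by Behrend's construction, or rather by the extension of it from Alon's paper~\cite{Alon02} which is needed because for general $k$ one wants the stronger property that the only solutions of $a_1+\cdots+a_{k-1} = (k-1)a_k$ inside $S$ are the trivial ones. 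I would then identify each balanced vector $v \in \Z_k^{nk}$ with a tuple of its "profile" data — essentially the characteristic data of the partition $(v|_1,\ldots,v|_k)$ — and select as $\calF$ those balanced vectors whose associated integer (obtained by reading the partition structure coordinatewise in a mixed-radix fashion, summed appropriately) lands in $S$. A counting/averaging argument (shift the whole family by a random additive shift in $\Z_M^{nk}$ applied coordinatewise and pick the best shift) guarantees that at least a $1/M^{o(1)}$ fraction of all balanced vectors survive, giving the claimed size $(2^{H(1/k)}-o(1))^{nk}$.

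**Key steps in order.** (1) Set up the encoding: to each coordinate $\ell \in [nk]$ and each balanced $v$ associate a weight depending on which block $v_\ell$ lies in, and define $\phi(v) = \sum_\ell w_\ell(v_\ell) \in \Z_M^{nk}$ or a single integer via a base-$M$ scheme, chosen so that the partition condition "$v_1|_1,\ldots,v_k|_k$ partition $[nk]$" forces, at each coordinate $\ell$, exactly one of the $v_j$ to have value $j$ there, hence a linear relation among the $\phi(v_j)$. (2) Show that the partition condition implies $\phi(v_1) + \cdots + \phi(v_k) = $ (a fixed constant) coordinatewise with the "digits" never carrying — this is where balancedness is essential, ensuring each coordinate contributes in a controlled range. (3) Invoke the no-$k$-AP property of $S$ to conclude that any such relation among elements of $S$ forces $\phi(v_1) = \cdots = \phi(v_k)$, and then argue injectivity of (the relevant part of) $\phi$ on balanced vectors to get $v_1 = \cdots = v_k$. (4) Do the shift-and-average count to extract $\calF$ of the desired cardinality. (5) The "only if" direction is trivial since if all $v_j$ are equal to a balanced $v$ then $v|_1,\ldots,v|_k$ partition $[nk]$ by definition of balancedness.

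**Main obstacle.** The delicate point — and where I expect the real work to be — is step (2)–(3): arranging the digit/weight scheme so that the combinatorial partition condition translates \emph{exactly} into an additive equation with no carries, and so that the forbidden configurations in the Behrend/Alon set are precisely the ones that could violate the "if and only if." For $k=3$ Coppersmith–Winograd handle this with a clever choice where the three coordinates play symmetric roles; for general $k$ one must be careful that the relation $a_1 + \cdots + a_k = $ const among $k$ numbers is genuinely the three-free-type condition one can forbid, which is why the ordinary Behrend set is not quite enough and Alon's strengthening (avoiding solutions to $\sum_{i<k} a_i = (k-1)a_k$) is invoked. Getting the parameters to line up so that the surviving family still has size $(2^{H(1/k)}-o(1))^{nk}$ — i.e. that the Behrend-type density loss is only subexponential in $nk$ — is then a routine but somewhat technical verification, and verifying that $\phi$ restricted to balanced vectors is injective modulo the allowed symmetry is the last thing to check.
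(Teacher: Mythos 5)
Your high-level strategy—hash balanced vectors into a Behrend-type set avoiding nontrivial solutions of $a_1+\cdots+a_{k-1}=(k-1)a_k$, and use a randomized choice of weights to ensure density—is the right framework and matches the spirit of the paper's construction. You also correctly identify the variant of Behrend's theorem from Alon's paper as the one needed. However, there is a genuine gap in step (3), and it is precisely where the main technical work of the proof lies.

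Your plan is to conclude from the hash collision $\phi(v_1)=\cdots=\phi(v_k)$ that $v_1=\cdots=v_k$ by ``arguing injectivity of $\phi$.'' This cannot work: there are $\Theta\bigl(\binom{nk}{n,\ldots,n}\bigr)$ balanced vectors but only $M = $ (roughly) $\binom{n(k-1)}{n,\ldots,n}^{1/(k-2)}$ hash values, so $\phi$ is wildly non-injective, and the Behrend set only tells you that the $k$ vectors land in the same hash bucket. A single bucket can easily contain distinct partitions $P$ and $Q$ that share, say, their first block $I_1$; then taking $v_1=P$ and $v_2=\cdots=v_k=Q$ gives $k$ vectors, not all equal, whose blocks $v_1|_1,\ldots,v_k|_k$ nonetheless partition $[nk]$ (because $Q$ is itself a partition and $v_1|_1 = I_1 = Q|_1$). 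So after hashing you are nowhere near done.

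What the paper actually does is a second, pruning step: from each bucket $L_i$ it deletes every partition that shares \emph{any} block with another partition in $L_i$, leaving $L'_i$. Only after this deletion does the ``partition implies all equal'' conclusion hold (the Behrend argument places all $k$ witnesses in one bucket, and then the shared-block condition forces them all to coincide). The delicate part is then showing the pruning doesn't destroy too much: one must bound the expected number of pairs of partitions in $L_i$ that collide on some block, which requires classifying such pairs by their coarsest shared-block structure (the $(t_1,\ldots,t_\ell)$-similar partitions of Definition~\ref{def:similar}) and tuning $M \approx \binom{n(k-1)}{n,\ldots,n}^{1/(k-2)}$ so that the union bound just barely works. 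This pruning step and its second-moment analysis are the missing idea in your proposal; without them the argument as written would fail on the counterexample above, and a simple shift-and-average over a single hash function does not by itself recover it.
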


\begin{remark}
The cardinality of $\calF$ in Theorem~\ref{thm:GenkCW} is optimal up to the $o(1)$ term. Indeed, the requirement on $\calF$ implies that the, say, $v|_1$'s for $v \in \calF$ are distinct subsets of size $n$ of $[n \cdot k]$, so $|\calF| \leq {nk \choose n} \approx 2^{H(1/k)nk}$.
\end{remark}

In the proof of Theorem~\ref{thm:GenkCW} we use the following extension of Behrend's result~\cite{Behrend46}.

\begin{lemma}[Lemma~3.1 in~\cite{Alon02}]\label{lemma:Behrend}
For every fixed integer $r \geq 2$ and every positive integer $m$, there exists a set $B \subseteq [m]$ of size
$$|B| \geq \frac{m}{e^{10\sqrt{\log m \log r}}}$$
with no non-trivial\footnote{A {\em trivial} solution is a solution that satisfies $x_1 = \cdots = x_{r+1}$.} solutions to the equation $x_1+x_2+\cdots+x_r = r \cdot x_{r+1}$.
\end{lemma}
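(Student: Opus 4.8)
The plan is to run Behrend's sphere construction~\cite{Behrend46} in the sharpened form used by Alon~\cite{Alon02}, adapted so that it avoids \emph{every} non-trivial solution of $x_1+\cdots+x_r = r\,x_{r+1}$ rather than just three-term progressions. Throughout, $\log$ denotes the natural logarithm. Fix a base $d$ and a length $n$, to be chosen at the end, subject to $d^n \le m$, and put $t := \lceil d/r\rceil - 1$, so that $rt < d$. Let $S \subseteq \{0,1,\ldots,d^n-1\}$ be the set of integers all of whose base-$d$ digits lie in $\{0,1,\ldots,t\}$, and let $\phi(x)\in\Z^n$ be the digit vector of $x\in S$, so $|S| = (t+1)^n$. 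The crucial observation is that, because $rt<d$, adding $r$ elements of $S$ (or multiplying one element of $S$ by $r$) never produces a carry; hence for $x_1,\ldots,x_{r+1}\in S$ one has $x_1+\cdots+x_r = r\,x_{r+1}$ if and only if $\phi(x_1)+\cdots+\phi(x_r) = r\,\phi(x_{r+1})$, i.e.\ exactly when $\phi(x_{r+1})$ is the centroid of $\phi(x_1),\ldots,\phi(x_r)$.

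Next I would stratify $S$ by the integer $\|\phi(x)\|_2^2 = \sum_{i=1}^n \phi(x)_i^2 \in \{0,1,\ldots,nt^2\}$, and apply the pigeonhole principle to obtain a level set $B$ with $|B| \ge (t+1)^n/(nt^2+1)$ on which all digit vectors share a common Euclidean norm $\rho$. To see that $B$ has only trivial solutions, suppose $x_1,\ldots,x_{r+1}\in B$ satisfy the equation; then by the observation above $\rho^2 = \|\phi(x_{r+1})\|^2 = r^{-2}\|\phi(x_1)+\cdots+\phi(x_r)\|^2$, and expanding the square while bounding each of the $\binom r2$ cross terms $\langle \phi(x_i),\phi(x_j)\rangle \le \|\phi(x_i)\|\,\|\phi(x_j)\| = \rho^2$ by Cauchy--Schwarz forces equality in every one of those applications. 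Since the $\phi(x_i)$ have non-negative entries and equal norm, this gives $\phi(x_1)=\cdots=\phi(x_r)$, hence $\phi(x_{r+1})$ equals their common value, hence $x_1=\cdots=x_{r+1}$, a trivial solution. Finally, because the equation is invariant under adding a common constant to all variables, a suitable translate of $B$ lies inside $[m]=\{1,\ldots,m\}$ and retains this property.

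It then remains to choose $d$ and $n$ so that $|B|$ meets the claimed size. Taking $n := \lfloor \log_d m\rfloor$ (assuming $m\ge d$) yields $d^n > m/d$, and since $t+1 = \lceil d/r\rceil \ge d/r$ and $nt^2+1 \le 2nd^2$, one gets $|B| \ge m/(2nd^3 r^n)$, whence, using $n \le (\log m)/(\log d)$,
$$\log\!\bigl(m/|B|\bigr) \;\le\; 3\log d \;+\; \frac{(\log m)(\log r)}{\log d} \;+\; \log\log m \;+\; O(1).$$
Choosing $d := \lceil e^{\sqrt{(\log m)(\log r)/3}}\rceil$ makes $\log d$ at least the ideal value $\sqrt{(\log m)(\log r)/3}$ and at most $1$ more, balancing the first two terms and giving $\log(m/|B|) \le (2\sqrt3 + o(1))\sqrt{(\log m)(\log r)}$, which is comfortably below $10\sqrt{(\log m)(\log r)}$ once $m$ is large. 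For the remaining small values of $m$ — in particular whenever this choice would force $d<r+1$ or $m<d$ — a short calculation shows $m < r^{100}$ in that regime, so $m\cdot e^{-10\sqrt{(\log m)(\log r)}} < 1$ and the trivial set $B=\{1\}$ already satisfies the bound. I expect this last, purely quantitative, step to be the main obstacle: one must track the roundings of $d$ and $n$, the lower-order $\log\log m$ slack, and the small-$m$ boundary cases carefully enough to certify that the single explicit constant $10$ works uniformly in $m$ and $r$; the structural core (no carries $\Rightarrow$ centroid on the sphere $\Rightarrow$ all vectors equal) is routine.
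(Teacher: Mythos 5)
Your proof is correct and is essentially the argument of Lemma~3.1 in~\cite{Alon02} itself (Behrend's sphere construction: carry-free base-$d$ encoding with digits below $d/r$, pigeonholing on the squared Euclidean norm, and the Cauchy--Schwarz equality case forcing all vectors to coincide), which the present paper imports without proof. The quantitative step you flag as the main obstacle is in fact safe: the bound is non-trivial only when $\log m > 100\log r$, and in that regime the main term $2\sqrt{3}\sqrt{(\log m)(\log r)}$ plus the $\log\log m + O(1)$ slack sits comfortably under $10\sqrt{(\log m)(\log r)}$.
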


\begin{proof}[ of Theorem~\ref{thm:GenkCW}]
For a sufficiently large $n$ denote $N = n \cdot k$, and let $M$ be the smallest prime which satisfies
\begin{eqnarray}\label{eq:M}
M \geq c(k) \cdot {n(k-1) \choose n,\ldots,n}^{1/(k-2)},
\end{eqnarray}
where $c = c(k)$ is a constant that depends solely on $k$ and will be determined later. As is well known, $M$ is at most twice its lower bound in~(\ref{eq:M}).
By Lemma~\ref{lemma:Behrend}, applied with $m = \lfloor M/(k-1) \rfloor$, there exists a set $B = \{b_1,\ldots,b_{|B|}\} \subseteq [m]$ of size $|B| = m^{1-o(1)} = M^{1-o(1)}$ with no non-trivial solutions to the equation
\begin{eqnarray}\label{eqn:Behrend}
x_1+x_2+\cdots+x_{k-1} = (k-1) \cdot x_{k}.
\end{eqnarray}
Since $B$ is contained in $[m]$, it contains no non-trivial solutions to Equation~(\ref{eqn:Behrend}) taken modulo $M$ as well.

Consider the set $\calI$ of all the subsets of $[N]$ of size $n$, and identify the sets in $\calI$ with their characteristic vectors in $\{0,1\}^N$. Let $w_1,\ldots,w_N$ and $c_1,\ldots,c_k$ be integers chosen at random uniformly and independently from $\Fset_M$, and denote $w=(w_1,\ldots,w_N)$. For these numbers we define $k$ mappings $\beta_1,\ldots,\beta_k : \calI \rightarrow \Fset_M$ as follows. For $1 \leq j \leq k-1$, $\beta_j$ is defined by
$$\beta_j(I) = \langle w,I \rangle +c_j~~\mbox{mod }M,$$
and $\beta_k$ is defined by
$$\beta_k(I) = \Big( \langle w,[N] \setminus I \rangle + \sum_{j=1}^{k-1}{c_j} \Big)/(k-1)~~\mbox{mod }M.$$
%Note that the division by $k-1$ modulo $M$ is well defined, as $M$ does not divide $k-1$.

The construction involves two steps. First, for every $1 \leq i \leq |B|$, let $L_i$ denote the set of all $k$-tuples of sets $(I_1,\ldots,I_k) \in \calI^k$ satisfying $I_1 \cup \ldots \cup I_k = [N]$ and $\beta_j(I_j) = b_i$ for every $1 \leq j \leq k$. Second, remove from every $L_i$ all the $k$-tuples $(I_1,\ldots,I_k) \in L_i$ that share some set $I_j$ with other $k$-tuples in $L_i$, that is, satisfy $I_j = J_j$ for some $j$ and $(J_1,\ldots,J_k) \in L_i$. We denote by $L'_i \subseteq L_i$ the obtained set.

Every partition $(I_1,\ldots,I_k) \in \calI^k$ of $[N]$ can be naturally encoded by a balanced vector $v$ in $\Z_k^N$ defined by $v|_j=I_j$ for every $1 \leq j \leq k$. Define $\calF \subset \Z_k^N$ to be the set of partitions in the union $\cup_{1 \leq i \leq |B|}{L'_i}$ encoded as vectors in $\Z_k^N$. We first show that $\calF$ satisfies the property required in Theorem~\ref{thm:GenkCW}, and then analyze its expected size.

\begin{claim}
For every $k$ vectors $v_1,\ldots,v_k \in \calF$, the sets $v_1|_1,\ldots,v_k|_k$ form a partition of $[N]$ if and only if $v_1 = \cdots = v_k$.
\end{claim}

\begin{proof}
It is clear that if $v_1 = \cdots = v_k$ then $v_1|_1,\ldots,v_k|_k$ form a partition of $[N]$. For the other direction, consider $k$ vectors $v_1,\ldots,v_k \in \calF$, and denote $I_j = v_j|_j$ for every $1 \leq j \leq k$. Assume by contradiction that the vectors are not all equal and that $(I_1,\ldots,I_k) \in \calI^k$ is a partition of $[N]$. For every $1 \leq j \leq k$ denote $b_j = \beta_j(I_j)$, and observe that
$$\sum_{j=1}^{k-1}{b_j} = \sum_{j=1}^{k-1}{\beta_j(I_j)} = \sum_{j=1}^{k-1}{\langle w,I_j \rangle} +\sum_{j=1}^{k-1}{c_j} = \langle w,[N] \setminus I_k \rangle + \sum_{j=1}^{k-1}{c_j} = (k-1) \cdot \beta_k(I_k) = (k-1) \cdot b_k,$$
where all the equalities hold modulo $M$.
This implies that the numbers $b_1,\ldots,b_k \in B$ satisfy Equation~(\ref{eqn:Behrend}) modulo $M$, hence by our choice of $B$, they must be all equal. Therefore, the vectors $v_1,\ldots,v_k$ correspond to $k$ partitions in the same set $L'_i$. This implies that the partition $(I_1,\ldots,I_k)$ belongs to $L_i$ and shares a subset of $\calI$ with each of the partitions that correspond to the vectors $v_1,\ldots,v_k$. However, this implies that all these partitions were not added to $L'_i$ in the second step of the construction. Hence all the $v_i$'s are equal, in contradiction.
\end{proof}

We turn to analyze the expected size of the collection $\calF$. We start with the size of the sets $L_i$ (before performing the second step of the construction).

\begin{claim}\label{claim:sizeLi}
For every $1 \leq i \leq |B|$, the expected size of the set $L_i$ is ${nk \choose n,\ldots,n} \cdot M^{-(k-1)}$.
\end{claim}

\begin{proof}
Fix $1 \leq i \leq |B|$, and let $(I_1,\ldots,I_k) \in \calI^k$ be a partition of $[N]$. Recall that $(I_1,\ldots,I_k)$ is added to $L_i$ if $\beta_j(I_j) = b_i$ for every $1 \leq j \leq k$. We claim that this happens with probability $M^{-(k-1)}$. Indeed, the $k-1$ events $\beta_j(I_j) = b_i$, $1 \leq j \leq k-1$, are independent, each of them occurs with probability $M^{-1}$, and once they all occur, it follows that $\beta_k(I_k) = b_i$ as well. The number of partitions of $[N]$ in $\calI^k$ is ${nk \choose n,\ldots,n}$, so by linearity of expectation the claim follows.
\end{proof}

We now turn to estimate the expected number of partitions that are removed from every $L_i$ in the second step of the construction. To do so, we have to consider the probability of two distinct partitions in $\calI^k$ that share some subset to belong to $L_i$. However, this probability depends on the specific pair of partitions. Indeed, sharing more than one subset of the partitions, or even a certain union of the subsets, might increase this probability. Hence, we need the following definition.

\begin{definition}\label{def:similar}
Let $t_1 \leq \ldots \leq t_\ell$ be $\ell$ positive integers satisfying $\sum_{r=1}^{\ell}{t_r}=k$. We say that two partitions $(I_1,\ldots,I_k)$ and $(J_1,\ldots,J_k)$ of $[N]$ in $\calI^k$ are {\em $(t_1,\ldots,t_\ell)$-similar} if there exists a partition of $[k]$ into $\ell$ sets $T_1,\ldots,T_\ell$ of sizes $t_1,\ldots,t_\ell$ respectively, such that for some permutation $\pi:[k] \rightarrow [k]$,
\begin{eqnarray}\label{eq:similar}
\cup_{i \in T_r}{I_i} = \cup_{i \in T_r}{J_{\pi(i)}}
\end{eqnarray}
for every $1 \leq r \leq \ell$, and, in addition, no refinement of the partition $T_1,\ldots,T_\ell$ satisfies~(\ref{eq:similar}) for any permutation $\pi$.
\end{definition}

\begin{claim}\label{claim:probability}
Let $(I_1,\ldots,I_k)$ and $(J_1,\ldots,J_k)$ be distinct $(t_1,\ldots,t_\ell)$-similar partitions of $[N]$ in $\calI^k$ for some $\ell$ positive integers $t_1 \leq \ldots \leq t_\ell$ satisfying $\sum_{r=1}^{\ell}{t_r}=k$. Then, for every $1 \leq i \leq |B|$, the following holds.
\begin{enumerate}
  \item\label{itm:l<k} For $1 \leq \ell \leq k-1$, the probability that the two partitions are in $L_i$ is at most $M^{-(k-1)} \cdot M^{-(k-\ell)}$.
  \item\label{itm:l=k} For $\ell = k$, the probability that the two partitions are in $L_i$ is at most $M^{-(k-1)} \cdot M^{-1}$.
\end{enumerate}
\end{claim}

\begin{proof}
Let $(I_1,\ldots,I_k)$ and $(J_1,\ldots,J_k)$ be distinct $(t_1,\ldots,t_\ell)$-similar partitions of $[N]$ in $\calI^k$, and let $T_1,\ldots,T_\ell$ and $\pi$ be the corresponding partition and permutation of $[k]$ as in Definition~\ref{def:similar}.

For Item~\ref{itm:l<k}, we fix the values of $c_1,\ldots,c_k$ and analyze the probability that the two partitions $(I_1,\ldots,I_k)$ and $(J_1,\ldots,J_k)$ are in $L_i$ over the random choice of $w_1,\ldots,w_N$. First, notice that the $k-1$ events $\beta_j(I_j) = b_i$ for $1 \leq j \leq k-1$ are independent, occur with probability $M^{-1}$ each, and imply the event $\beta_k(I_k) = b_i$. So the probability that $(I_1,\ldots,I_k) \in L_i$ is $M^{-(k-1)}$.
For $1 \leq r \leq \ell$, let $A_r$ denote the event that the equalities $\beta_{\pi(j)}(J_{\pi(j)}) = b_i$ hold for every $j \in T_r$. It can be shown that for every $1 \leq r \leq \ell$, the probability that $A_r$ occurs conditioned on the event $(I_1,\ldots,I_k) \in L_i$ and on $A_1,\ldots,A_{r-1}$ is $M^{-(t_r-1)}$. Indeed, since no refinement of the partition $T_1,\ldots,T_\ell$ satisfies the condition of Definition~\ref{def:similar}, it follows that $t_r-1$ of the equalities of $A_r$ are independent, occur with probability $M^{-1}$ each, and might imply the last one. This can be verified by observing that every vector $J_{j}$ that corresponds to such an equality is linearly independent of the vectors that correspond to the previously considered equalities. Hence, the probability that the two $k$-tuples are in $L_i$ is at most
$$M^{-(k-1)} \cdot M^{-\sum_{r=1}^{\ell}{(t_r-1)}} = M^{-(k-1)} \cdot M^{-(k-\ell)}.$$

For Item~\ref{itm:l=k}, take $\ell = k$ and notice that in this case, $(J_1,\ldots,J_k)$ is a permutation of $(I_1,\ldots,I_k)$, so we have $t_1 = \cdots = t_\ell = 1$. The probability that $(I_1,\ldots,I_k) \in L_i$ is again $M^{-(k-1)}$. Since the two partitions are distinct, there are distinct $j,j'$ for which $I_j = J_{j'}$. Assume, without loss of generality, that $j' < k$. By the randomness of the choice of $c_{j'}$, the probability that $\beta_{j'}(J_{j'}) = b_i$, conditioned on $(I_1,\ldots,I_k) \in L_i$, is $M^{-1}$. Therefore, the probability that both the partitions are in $L_i$ is at most $M^{-(k-1)} \cdot M^{-1}$.
\end{proof}

\begin{claim}\label{claim:L'}
For every $1 \leq i \leq |B|$, the expected size of the set $L'_i$ is at least $\frac{1}{2} \cdot {nk \choose n,\ldots,n} \cdot M^{-(k-1)}$.
\end{claim}

\begin{proof}
By Claim~\ref{claim:sizeLi}, the expected size of $L_i$ is ${nk \choose n,\ldots,n} \cdot M^{-(k-1)}$. We turn to bound from above the expected number of partitions removed from $L_i$ in the second step. Notice that if two partitions of $[N]$ in $\calI^k$ share a subset then they are $(t_1,\ldots,t_\ell)$-similar for some $\ell \geq 2$ positive integers $t_1 \leq \ldots \leq t_\ell$ satisfying $\sum_{r=1}^{\ell}{t_r} =k$ (in fact, we also know that at least one of the $t_r$'s equals $1$). Therefore, we turn to bound the expected number of (ordered) pairs of $(t_1,\ldots,t_\ell)$-similar partitions of $[N]$ in $\calI^k$ for some $t_1,\ldots,t_\ell$ as above.

We start with the case $2 \leq \ell \leq k-1$. Fix a partition $(I_1,\ldots,I_k)$ of $[N]$ in $\calI^k$, and $\ell$ positive integers $t_1 \leq \ldots \leq t_\ell$ satisfying $\sum_{r=1}^{\ell}{t_r} =k$. The number of partitions $(J_1,\ldots,J_k)$ of $[N]$ in $\calI^k$ which are $(t_1,\ldots,t_\ell)$-similar to $(I_1,\ldots,I_k)$, associated with certain partition $T_1,\ldots,T_\ell$ and permutation $\pi$ of $[k]$, is at most
$${n \cdot t_1 \choose n,\ldots,n} \cdot \ldots \cdot {n \cdot t_\ell \choose n,\ldots,n} \leq {n(k - \ell +1) \choose n,\ldots,n}.$$
By Item~\ref{itm:l<k} of Claim~\ref{claim:probability}, the probability that two such partitions are in $L_i$ is at most $M^{-(k-1)} \cdot M^{-(k-\ell)}$. It follows that the expected total number of pairs of $(t_1,\ldots,t_\ell)$-similar partitions in $L_i$ for {\em some} $t_1 \leq \ldots \leq t_\ell$ as above ($2 \leq \ell \leq k-1$) is at most
$$k^{O(k)} \cdot {nk \choose n,\ldots,n} \cdot {n(k - \ell+1) \choose n,\ldots,n} \cdot M^{-(k-1)} \cdot M^{-(k-\ell)},$$
where the $k^{O(k)}$ term counts all the possible choices of numbers $t_1,\ldots,t_\ell$, partitions $T_1,\ldots,T_\ell$, and permutations $\pi$ of $[k]$.
Now, consider the case $k=\ell$, that is, $t_1 = \cdots = t_\ell = 1$. Using Item~\ref{itm:l=k} of Claim~\ref{claim:probability}, the expected number of $(1,\ldots,1)$-similar partitions in $L_i$ is at most
$$k^{O(k)} \cdot {nk \choose n,\ldots,n} \cdot M^{-(k-1)}\cdot M^{-1}.$$
Therefore, by linearity of expectation, the expected number of partitions removed from $L_i$ is at most
$${nk \choose n,\ldots,n} \cdot M^{-(k-1)} \cdot \Big(k^{O(k)} \cdot M^{-1} + \sum_{\ell=2}^{k-1}{k^{O(k)} \cdot {n(k-\ell+1) \choose n,\ldots,n} \cdot M^{-(k-\ell)}} \Big).$$

Choosing the constant $c(k)$ in~(\ref{eq:M}) to be sufficiently large, we have $k^{O(k)} \cdot M^{-1} \leq \frac{1}{2k}$.
We turn to prove that for every $2 \leq \ell \leq k-1$,
\begin{eqnarray}\label{eqn:termCW}
{k^{O(k)} \cdot {n(k-\ell+1) \choose n,\ldots,n} \cdot M^{-(k-\ell)}} \leq \frac{1}{2k},
\end{eqnarray}
as this implies, combined with Claim~\ref{claim:sizeLi}, that the expected size of $L'_i$ is at least
$${nk \choose n,\ldots,n} \cdot M^{-(k-1)}- \frac{1}{2}\cdot {nk \choose n,\ldots,n} \cdot M^{-(k-1)} = \frac{1}{2}\cdot {nk \choose n,\ldots,n} \cdot M^{-(k-1)}.$$
For~(\ref{eqn:termCW}), observe that our choice of $M$ satisfies
$${k^{O(k)} \cdot {n(k-\ell+1) \choose n,\ldots,n} \cdot M^{-(k-\ell)}} \leq \frac{1}{2k} \cdot {n(k-\ell+1) \choose n,\ldots,n} \cdot {n(k-1) \choose n,\ldots,n}^{-\frac{k-\ell}{k-2}} \leq \frac{1}{2k},$$
where the first inequality holds for a sufficiently large constant $c(k)$ in~(\ref{eq:M}), and the second follows from the inequality ${n(k-\ell+1) \choose n,\ldots,n}^{1/(k-\ell)} \leq {n(k-1) \choose n,\ldots,n}^{1/(k-2)}$ that holds for every $\ell \geq 2$ by monotonicity of the geometric mean.
\end{proof}

Finally, using Claim~\ref{claim:L'}, we conclude that there exists a choice of $w_1,\ldots,w_N$ and $c_1,\ldots,c_k$ for which
$$|\calF| = \sum_{i=1}^{|B|}{|L'_i|} \geq \frac{1}{2} \cdot {nk \choose n,\ldots,n} \cdot M^{-(k-1)} \cdot |B| \geq  {nk \choose n,\ldots,n} \cdot M^{-(k-2)-o(1)} \geq {nk \choose n}^{1-o(1)}.$$
By standard estimations of Binomial coefficients, this completes the proof of Theorem~\ref{thm:GenkCW}.
\end{proof}

\begin{corollary}
For every $k \geq 3$ and a prime $p$, the local PMF capacity for $k$-cycles over $\Fset_p$ is at least $2^{H(1/k)}$.
\end{corollary}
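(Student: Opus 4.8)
The plan is to feed the collection $\calF\subseteq\Z_k^{nk}$ produced by Theorem~\ref{thm:GenkCW} into a coordinatewise ``symbol-to-vector'' substitution, in the spirit of Theorem~\ref{thm:ConsPMF} but adapted to the partition condition of Theorem~\ref{thm:GenkCW} rather than to sunflower-freeness. The crucial feature I want from this substitution is that it uses only a \emph{single} coordinate of $\Fset_p$ for each coordinate of $\Z_k$, so that it incurs no dimension blow-up and therefore preserves the optimal base $2^{H(1/k)}$ in the exponent.

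Concretely, I would fix the vectors $u_1,\dots,u_k\in\Fset_p^{k}$ defined by $u_s=e_1-e_s$ (so $u_1=0$), and for $v\in\calF$ and $j\in[k]$ define $y_v^{(j)}\in\Fset_p^{nk}$ coordinatewise by $(y_v^{(j)})_\ell=(u_{v_\ell})_j$. Three elementary facts are needed: (i) the entries of each $u_s$ sum to $0$; (ii) for all $s_1,\dots,s_k\in\Z_k$ one has $\sum_{j=1}^k(u_{s_j})_j=1-|\{j:s_j=j\}|$ in $\Fset_p$, which is nonzero whenever $s_j\neq j$ for every $j$; and (iii) $u_1,\dots,u_k$ are pairwise distinct. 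I then claim that $\calG=\{(y_v^{(1)},\dots,y_v^{(k)}):v\in\calF\}$ is an $(nk,|\calF|)$ local PMF for $k$-cycles over $\Fset_p$. Fact (i) gives that every $k$-tuple of $\calG$ sums to zero coordinatewise. For the defining implication, suppose $v_1,\dots,v_k\in\calF$ satisfy $\sum_j y_{v_j}^{(j)}=0$; then at each coordinate $\ell$ we get $\sum_j(u_{(v_j)_\ell})_j=0$, so by (ii) there is some $j$ with $(v_j)_\ell=j$, i.e.\ each of the $nk$ coordinates lies in one of the sets $v_1|_1,\dots,v_k|_k$. Since these $k$ sets have total size $\sum_j|v_j|_j|=nk$ (the $v_j$ are balanced) and cover $[nk]$, they are pairwise disjoint and hence partition $[nk]$, so Theorem~\ref{thm:GenkCW} forces $v_1=\cdots=v_k$. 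Fact (iii) makes $v\mapsto(y_v^{(1)},\dots,y_v^{(k)})$ injective on $\calF$, so $|\calG|=|\calF|$ and distinct indices of the PMF carry distinct tuples, as required.

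Putting $N=nk$, Theorem~\ref{thm:GenkCW} then yields, for every sufficiently large $N$ divisible by $k$, an $(N,(2^{H(1/k)}-o(1))^N)$ local PMF for $k$-cycles over $\Fset_p$, whence (invoking Remark~\ref{remark:limit} if one wants every large $N$) the local PMF capacity for $k$-cycles over $\Fset_p$ is at least $2^{H(1/k)}$. I do not expect a serious obstacle once Theorem~\ref{thm:GenkCW} is available; the one delicate point is to choose the substitution so that it fits into one $\Fset_p$-coordinate per $\Z_k$-coordinate while still being rich enough to (a) make the block at any ``no diagonal hit'' coordinate nonzero and (b) keep it zero whenever the $v_j$ all agree. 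The identity in (ii) does both at once, and the reduction of ``$\sum_j y_{v_j}^{(j)}=0$'' to ``partition'' uses only the \emph{counting} consequence ``every coordinate is covered at least once'' rather than the stronger ``covered exactly once''; this is precisely why the argument goes through uniformly for every prime $p$, including $p\le k-1$, where a coordinate could a priori carry $1+p$ diagonal hits. I would spell that edge case out explicitly in the write-up.
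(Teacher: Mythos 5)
Your proof is correct and takes essentially the same route as the paper: the paper maps $v$ to the tuple $\bigl(\mathbf{1}_{v|_1},\ldots,\mathbf{1}_{v|_{k-1}},-\mathbf{1}_{[nk]\setminus v|_k}\bigr)$, which is exactly your coordinatewise substitution $s\mapsto e_s-e_k$ (with the special index being $k$ rather than your $1$, and signs flipped), and both proofs reduce ``zero sum'' to ``every coordinate covered $\equiv 1 \pmod p$ times,'' then invoke the balanced total count $nk$ to force a genuine partition. You simply spell out the verification (and the $p\le k-1$ edge case) that the paper leaves as ``observe.''
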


\begin{proof}
By Theorem~\ref{thm:GenkCW}, for every sufficiently large $n$, there exists a collection $\calF$ of $(2^{H(1/k)} - o(1))^{nk}$ balanced vectors in $\Z_k^{nk}$ such that for every $k$ vectors $v_1,\ldots,v_k \in \calF$, the sets $v_1|_1,\ldots,v_k|_k$ form a partition of $[n \cdot k]$ if and only if $v_1 = \cdots = v_k$. For every vector $v \in \calF$ consider the $k$-tuple of vectors, whose first $k-1$ vectors are the characteristic vectors of $v|_1,v|_2,\ldots,v|_{k-1}$, and the last one is the characteristic vector of $[n] \setminus v|_k$ multiplied by $-1$ (modulo $p$). Observe that the collection of all $k$-tuples obtained in this way from the vectors of $\calF$ is an $(nk,|\calF|)$ local PMF for $k$-cycles over $\Fset_p$. Hence, the local PMF capacity for $k$-cycles over $\Fset_p$ is at least $2^{H(1/k)}$, as required.
\end{proof}
\noindent
The above corollary, combined with Lemma~\ref{lemma:LowerBound(m,n)}, completes the proof of Theorem~\ref{thm:GenkIntro}.
%\newpage
\bibliographystyle{abbrv}
\bibliography{sunflower}

\end{document}